\documentclass{article}

\usepackage{kr}

\usepackage[english]{babel}

\usepackage{times}
\usepackage{soul}
\usepackage{url}
\usepackage[hidelinks]{hyperref}
\usepackage[utf8]{inputenc}
\usepackage[small]{caption}
\usepackage{graphicx}
\usepackage{booktabs}
\usepackage{pifont}
\usepackage{listings}
\usepackage{todonotes}

\newif\ifArxivVersion

\ArxivVersiontrue

\usepackage[fig,nohypref,chgbar]{fmocdmac}

\usetikzlibrary{arrows,arrows.meta,calc}
\usetikzlibrary{shadows}
\usetikzlibrary{shapes.multipart}
\usetikzlibrary{positioning}
\usetikzlibrary{shadows}
\usetikzlibrary{shapes.multipart}
\usetikzlibrary{arrows.meta}
\usetikzlibrary{graphs,quotes}
\usepackage{xcolor-solarized}

\usepackage{cleveref}

\usepackage[normalem]{ulem}
\usepackage{xcolor}
\usepackage{soul}
\definecolor{myyellow}{RGB}{255,245,170}
\sethlcolor{myyellow}

\newtheorem{definition}{Definition}

\newtheorem{proposition}{Proposition}
\newtheorem{lemma}{Lemma}

\newtheorem{theorem}{Theorem}
\newtheorem{corollary}{Corollary}

\newtheorem{example}{Example}

\newcommand{\mcomment}[2]{\ifmmode\margincomment{#1}{#2}\else\footcomment{#1}{#2}\fi}
\newcommand{\footcomment}[2]{{\color{blue}\textbf{(#1)}}\footnote{\textbf{#1:} #2}}
\newcommand{\margincomment}[2]{{\color{blue}\textbf{(#1)}}\footnotemark\marginnote{\tiny\textsuperscript{\thefootnote}\textbf{#1:} #2}}

\newcommand{\filip}[1]{\mcomment{Filip}{#1}}

\usepackage[ruled,vlined,linesnumbered]{algorithm2e}
\SetKwComment{Comment}{/* }{ */}

\newcommand{\alogspace}{\mathsc{ALogSpace}}
\newcommand{\ptime}{\mathsc{P}}
\newcommand{\nptime}{\mathsc{NP}}

\newcommand{\graph}{\mathcal{G}}
\newcommand{\schema}{\mathcal{S}}
\newcommand{\cat}{\mathcal{C}}
\newcommand{\decl}[2]{#1 \colon\! #2}

\newcommand{\sel}{\mathit{sel}}

\newcommand{\nodes}{\mathsf{Nodes}}

\newcommand{\Names}{\mathsf{Names}}

\newcommand{\IRIs}{\mathsf{IRIs}}
\newcommand{\Blanks}{\mathsf{Blanks}}
\newcommand{\Literals}{\mathsf{Literals}}

\newcommand{\mathsc}[1]{{\normalfont\textsc{#1}}}

\newcommand{\OMIT}[1]{}

\newcommand{\pathExpr}{\pi}

\newcommand{\id}{\mathsf{id}}
\newcommand{\eq}{\mathsf{eq}}
\newcommand{\disj}{\mathsf{disj}}
\newcommand{\geqn}[2]{\exists^{\geq #1}#2.}
\newcommand{\leqn}[2]{\exists^{\leq #1}#2.}

\newcommand{\hasvalue}{\mathsf{test}}
\newcommand{\test}{\mathsf{test}}
\newcommand{\closed}{\mathsf{closed}}

\newcommand{\sem}[1]{\llbracket{#1}\rrbracket}
\newcommand{\semdelta}[3]{\llbracket{#1}\rrbracket^{#2}_{#3,\cat}}

\newcommand{\ValueTypes}{\mathcal{T}} 
\newcommand{\vtype}{\tau} 

\newcommand{\gDef}{{\color{orange} \ \Coloneqq \ }}
\newcommand{\gMid}{{\color{orange} \ \big|\ }}
\newcommand{\gEnd}{{\color{orange} \ .\ }}

\newcommand{\shexneigh}[1]{\big\{ #1 \big\}}
\newcommand{\shexneighzero}[1]{\shexneigh{#1 \shexeach \top}}

\newcommand{\shexref}{}
\newcommand{\shexeach}{\mathop{;}}
\newcommand{\shexone}{\mathop{|}}
\newcommand{\shexinverse}[1]{#1^{-}}
\newcommand{\shexneg}[1]{\neg #1}
\newcommand{\shexneginv}[1]{\neg{\shexinverse{#1}}}

\newcommand{\neigh}{\mathsf{Neigh}}

\newcommand{\const}{\mathsf{Const}}

\cmdmthsetext{Name}[\Names][s]
\usrmth{lsfp}{}{fun}
\usrmth{gsfp}{}{fun}

\cmdtxtoparname{All}
\cmdtxtoparname{SSL}
\cmdtxtoparname{SMS}
\cmdtxtoparname{LFP}
\cmdtxtoparname{GFP}

\newcommand{\microshex}{\ensuremath{\text{ShEx}_0}}
\newcommand{\microshacl}{\ensuremath{\text{SHACL}_0}}

\newcommand{\illustration}[1]{}

\newcommand{\testcode}[1]{\texttt{#1}}
\newcommand{\pass}{\color{RoyalBlue}}
\newcommand{\fail}{\color{BrickRed}}

\newcommand{\y}{{\pass {\Large $\bullet$}}}%

\newcommand{\n}{{\fail {\tiny $\blacksquare$}}}%

\newcommand{\cmark}{\pass \phantom{a}\ding{51}}%
\newcommand{\xmark}{\fail \phantom{a}\ding{55}\phantom{*}}%
\newcommand{\xxmark}{\fail \phantom{a}\ding{55}*}%

\newcommand{\ttc}[1]{\mathit{tc}(#1)}
\newcommand{\nbtc}[1]{\#(#1)}
\newcommand{\shapes}[1]{\mathit{shapes}(#1)}
\newcommand{\dettc}{\mathit{det}}
\newcommand{\preds}{\mathit{preds}}

\title{Common Foundations for Recursive Shape Languages}

\author{
Shqiponja Ahmetaj$^1$\!\!\and
Iovka Boneva$^2$\!\!\and
Jan Hidders$^3$\!\!\and
Maxime Jakubowski$^1$\!\!\and \\ 
Jose-Emilio Labra-Gayo$^4$\!\!\and
Wim Martens$^5$\!\!\and
Fabio Mogavero$^6$\!\!\and
Filip Murlak$^7$\!\!\and \\
Cem Okulmus$^8$\!\!\and
Ognjen Savkovi\'c$^9$\!\!\and
Mantas Šimkus$^1$\!\!\and
Dominik Tomaszuk$^{10}$ \\
\affiliations
$^1$TU Wien, Austria  \\
$^2$Univ. Lille, F-59000 Lille, France \\
$^3$Birkbeck, University of London, UK  \\
$^4$University of Oviedo, Spain  \\
$^5$University of Bayreuth, Germany  \\
$^6$Universita di Napoli Federico II, Italy  \\
$^7$University of Warsaw, Poland\\ 
$^8$Paderborn University, Germany  \\
$^9$Free University of Bolzano, Italy  \\
$^{10}$University of Bia{\l}ystok, Poland
}

\usepackage[dvipsnames]{xcolor}

\begin{document}

\maketitle


\begin{abstract}
As schema languages for RDF data become more mature, we are seeing efforts to extend them with recursive semantics, applying diverse ideas from logic programming and description logics.
While ShEx has an official recursive semantics based on  \emph{greatest fixpoints} (\GFP), the discussion for SHACL is ongoing and seems to be converging towards \emph{least fixpoints} (\LFP). 
A practical study we perform shows that, indeed, ShEx validators implement \GFP, whereas SHACL validators are more heterogeneous. 
This situation creates tension between ShEx and SHACL, as their semantic commitments appear to diverge, potentially undermining interoperability and predictability. 
We aim to clarify this design space by comparing the main semantic options in a principled yet accessible way, hoping to engage both theoreticians and practitioners, especially those involved in developing tools and standards.   
We present a unifying formal semantics that treats \LFP, \GFP, and supported model semantics (\SMS), clarifying their relationships and highlighting a duality between \LFP and \GFP on stratified fragments. 
Next, we investigate to which extent the directions taken by SHACL and ShEx are compatible. 
We show that, although ShEx and SHACL seem to be going in different directions, they include large fragments with identical expressive power. 
Moreover, there is a strong correspondence between these fragments through the aforementioned principle of duality.
Finally, we present a complete picture of the data and combined complexity of ShEx and SHACL validation under \LFP, \GFP, and \SMS, showing that \SMS comes at a higher computational cost under standard complexity-theoretic assumptions.
\end{abstract}






\section{Introduction}

Graph-based data representations are rapidly gaining importance due to the unprecedented growth of interconnected data and their increasing role in AI-driven systems~\cite{SakrBVIAAAABBDV21}. Furthermore, their flexibility makes them an ideal tool for bridging the gap between structured data and machine learning tools in industrial applications \cite{anuradha2023deep,mohamed2022rdfframes}. 
Since automatic processing of data (graph or otherwise) is significantly facilitated by the presence of a \emph{schema}, we have seen various efforts to design schema languages for graph-based data representations, notably ShEx~\cite{PGS14,SBG15,shex_standard}, 
SHACL~\cite{KK17}, and PG-Schema~\cite{ABDF21,ABDF23}. 
In this paper, we zoom in on \emph{recursion} as a foundational mechanism that directly shapes the semantics of cyclic schema dependencies. We therefore focus on ShEx and SHACL, the de facto standard schema languages for RDF, since PG-Schema does not use this kind of recursion.

It is well-known that designing declarative languages with recursion and negation is non-trivial~\cite{AHV95,Aref25}. SHACL and ShEx's open-world nature (contrasting the typical database setting) does not make this task any easier.
This might explain why the design of recursion is taking different directions in ShEx and SHACL. While ShEx has long had an official semantics based on \emph{greatest fixpoints} (\GFP) \cite{shex_standard,SBG15,BGP17},
the discussion for SHACL is ongoing.
Proposals include a \emph{supported-model semantics} (\SMS)~\cite{CRS18}, grounded in first-order logic, as well as  alternatives inspired by fixpoint and logic-programming paradigms~\cite{ACORSS20,OS24,BJ21,AhmetajLOS22,PKM22}. The academic discussion appears to be converging towards \emph{least fixpoints} (\LFP), but the issue remains to be addressed by the W3C working group.%

The current situation causes some tension between ShEx and SHACL since the proposals seem to diverge, which may cause interoperability issues in the future. Interoperability is important in large-scale industry applications. ShEx and SHACL can happily co-exist and have their own approach and identity, but the more \emph{fundamentally different} they are, the more painful it becomes to switch from one to the other if we realise at a late development stage that we have chosen the wrong language.  



\smallskip
\textit{Our Contributions.}
(1) We shed light on this matter by proposing a unifying formal framework that is compatible with the work of Ahmetaj et al.~\cite{www25}. We define a \emph{simple shape language} ($\SSL$) with the three main options for recursion: \LFP, \GFP, and \SMS
(Section~\ref{sec:semantics}). 
The language $\SSL$ is carefully designed to be much simpler than ShEx and SHACL, yet sufficiently powerful to illustrate the fundamental tradeoffs between \LFP, \GFP, and \SMS, and to test which semantics is used by validation engines. 
Despite its simplicity, it is still expressive: it corresponds to the alternation-free fragment of the modal $\mu$-calculus \cite{Koz83} with nominals but without atomic propositions, which is an important yardstick. Using the duality principle of fixpoint theory, we observe that $\SSL$ with \LFP is equally expressive as $\SSL$ with \GFP, using a straightforward translation. This close relationship is good news for the compatibility between ShEx with \GFP and SHACL with \LFP. The situation for \SMS 
differs, see 
(4).

\smallskip
(2)  
We experimentally explore whether existing SHACL and ShEx validators use \LFP, \GFP, or \SMS (Section~\ref{sec:thr;sub:comstdschval}). 
Our tests reveal that current tools implicitly commit to different semantics, 
which leads to non-uniform behaviour across implementations. In a nutshell, ShEx validators seem to implement \GFP, whereas SHACL validators are more heterogeneous. This study is an important motivation of our work. It confirms our hypothesis that validation engines are not aligned and 
underscores why there is a need for a consensus on recursive semantics in SHACL. Once a semantics for recursion in SHACL is agreed upon, our tests can be added to the official test suite~\cite{LKK24}.

\smallskip
(3) We then move to the formal study of real schema languages. Though neat and relatively expressive, $\SSL$ is still a small part of real ShEx and SHACL. In Sections~\ref{sec:full-shex-shacl}--\ref{sec:comparison-shex-shacl}, we discuss translations 
between large fragments of ShEx (with \GFP) and SHACL with \LFP. Our translations show that these 
larger fragments have the same expressive power and that the duality principle that we saw for $\SSL$ extends to real-world languages.

\smallskip
(4) We provide a complete overview of the data and combined complexity of ShEx, SHACL, and $\SSL$ with \LFP, \GFP, and \SMS (Section~\ref{ssec:complexity}). In some cases the complexity has been known already \cite{ACORSS20,SBG15,BJVdB24,BJ21}; we complete the picture by establishing tight bounds for the combined and data complexity for ShEx. The main conclusion from the complete set of results is that \LFP and \GFP allow polynomial-time data complexity for all languages, whereas \SMS immediately leads to intractability, even for very simple languages like $\SSL$. This means that \LFP and \GFP are much easier to deal with from an algorithmic perspective, and would be our recommendation for recursion in ShEx and/or SHACL.


\smallskip
(5) As a group of authors including contributors to the design of both ShEx and SHACL, we reach a common understanding of the consequences of the choice of the semantics for recursion, and clarify that it is fine for SHACL to use \LFP, while ShEx uses \GFP. {In a nutshell, \LFP is a very sensible choice because it is natural, its expressiveness is compatible with ShEx's \GFP (allowing effective translation), and has good complexity properties (see also Section~\ref{sec:conclusion}).

\ifArxivVersion
Full proofs can be found in the appendix. 
\else 
Missing proofs can be found in the full version of this paper~\cite{arxiv}.
\fi

\subsubsection*{Related Work.} 



The development of constraint languages for graph-based data has a rich history. SHACL~\cite{KK17} was designed taking the initial inspiration from SPIN Rules~\cite{KHI11}, whereas ShEx~\cite{PGS14} draws more heavily on paradigms established in XML Schema and RELAX NG~\cite{xsd,xml,MNNS17}. Both languages aim to enable \emph{prescriptive} validation of RDF, in contrast to the more \emph{descriptive} nature of languages such as OWL~\cite{OWL}.
Nevertheless, they diverge in several key features, which poses challenges for tool interoperability and for users seeking to transition between the two frameworks.
In our prior work~\cite{www25}, we take initial steps toward reconciling these differences by proposing a common framework for ShEx, SHACL, and PG-Schema, which was developed  to provide similar schema capabilities in the context of the recent GQL standard~\cite{GQL,gqlpaper}. The present work builds directly on that foundation, with a particular focus on recursion, a feature explicitly excluded from the earlier study.


Before this, expressiveness and computational properties of (non-recursive) SHACL have been extensively studied~\cite{BJB22,BJVdB24,PKMN20,PKM22}. Similarly, the foundations of ShEx, including its formal semantics and complexity, have been explored in depth~\cite{SBG15,BGP17}. %

\textit{Recursion in Shape Languages.}
The semantic challenges of recursion for shape languages, and particularly its interaction with negation, have been a significant topic of recent research. This is especially pronounced for SHACL, where the W3C recommendation~\cite{KK17} notably left the semantics of recursion undefined.  
The challenges of recursion were first formally addressed by Corman et al.~\cite{CRS18,CFRS19}, who proposed a \emph{supported model semantics} based on first-order logic.
Subsequent work has explored various semantics mirroring established paradigms from fixpoint logics and 
logic programming
~\cite{ACORSS20,OS24,BJ21,AhmetajLOS22,PKM22}. In contrast to SHACL, the semantics of recursion in ShEx has been consistently defined via \emph{greatest fixpoints}~\cite{BGP17,SBG15}. This 
divergence in the formalization process 
lead to 
significant inconsistencies across validators. 

\textit{Validation in Practice.} 
There is a growing body of work on the practical use of shape languages. This includes mining shapes from large knowledge graphs~\cite{RLH23,LissandriniRH24}, validating real-world datasets such as Wikidata~\cite{FSAP24}, and studying data provenance and explanations for validation outcomes~\cite{DDJB23,AhmetajDPS22}. The need to validate heterogeneous graph data has also motivated comparative studies of SHACL and ShEx~\cite{GPBK17,GGFE19,T17} as well as mappings between RDF and Property Graph paradigms~\cite{ATT20,H14}. Further extensions of ShEx for different graph types have been proposed~\cite{L24,L22}, alongside unifying graph data models such as OneGraph~\cite{LSHBBB23} and MillenniumDB’s Domain Graph Model~\cite{DCRMD23}. Semantic ambiguities in recursive SHACL directly affect these practical efforts, since the outcome of validation can depend on the choice of validator and its implicit semantics.

\textit{Connections to Description Logics and $\mu$-calculus.} Foundational work on SHACL has revealed strong connections to expressive Description Logics (DLs), the logics underlying OWL~\cite{BJB22,SRLS20,AhmetajDOPSS21}.
The problem of defining coherent semantics for recursive shape languages 
closely parallels the classic issue of \emph{terminological cycles} in DLs, extensively studied in the early 1990s. Terminologies with recursive definitions, where concept names may be defined directly or indirectly in terms of themselves, were recognized to require special semantic treatment to ensure meaningful interpretations. Initial studies~\cite{Baader1990,Baader1996,Nebel1991} identified three main approaches to interpreting such cycles: least fixpoint, greatest fixpoint, and descriptive (classical) semantics.
It was observed that, for many cyclic definitions, the greatest-fixpoint semantics yields more satisfactory results by allowing maximal solutions consistent with their intended meaning, whereas for others, the least-fixpoint semantics is more suitable
~\cite{Baader1996,DeGiacomoLenzerini1994}.
Further work argued that, instead of selecting a single semantics, better results are obtained by adopting a formalism that allows for the different semantics to coexist, where the correspondence with the modal~$\mu$-calculus was also established~\cite{Schild1994,DeGiacomoLenzerini1994}.
In our setting, we show that expressive yet tractable fragments of recursive SHACL and ShEx correspond to the alternation-free fragment of the modal $\mu$-calculus~\cite{Koz83}. We finally remark that the recent work in \cite{anoukstaticwfs} uses $\mu$-calculus to show decidability and complexity results for static analysis tasks in SHACL under the \emph{well-founded semantics}.


\section{The Multiple Semantics of Recursion}
\label{sec:semantics}


We now introduce the \emph{simple shape language}, which is a core of both ShEx and SHACL, and formally define the three main modes of recursion that are being considered.



\subsection{The RDF Data Model}
\label{sec:thr;sub:rdfdatmod}


An RDF graph is a finite set of triples in $(\IRIs \cup \Blanks) \times (\IRIs) \times (\IRIs \cup \Blanks \cup \Literals)$, where $\IRIs$, $\Literals$ and $\Blanks$ are three countable sets of IRIs, literal data values, and blank nodes, respectively.
If $(u, p, v)$ is a triple of an RDF graph, then $u$ is called its subject, $p$ its predicate, and $v$ its object.
The set of nodes of an RDF graph $\graph$, denoted $\nodes(\graph)$, is the set of IRIs, literals or blank nodes that appear in a subject or an object position in some triple of $\graph$.
We sometimes use \emph{predicates} to refer to the elements of $\IRIs$ when they are used in a predicate position of some triple.

To illustrate, consider the RDF graph $\graph$:  
\[
\graph =
\left\{
\begin{aligned}
  &(\texttt{ex{:}Alice}, \texttt{ex{:}knows}, \texttt{ex{:}Bob}),\\
  &(\texttt{ex{:}Bob}, \texttt{ex{:}age}, 42)
\end{aligned}
\right\}.
\]
Here \texttt{ex{:}Alice} and \texttt{ex{:}Bob} are IRIs used as nodes, 
\texttt{ex{:}knows} and \texttt{ex{:}age} are IRIs used as predicates, 
and $42$ is a literal value.  
In this case $\nodes(\graph) = \{\texttt{ex{:}Alice}, \texttt{ex{:}Bob}, 42\}$.

\subsection{A Simple Shape Language}
\label{sec:thr;sub:simshplan}

Consider a countable set $\Names$ of \emph{shape names}.
A \emph{shape} $\varphi$ is defined by the following syntax, where $s \in
\Names$ is a shape name, $p \in \IRIs$ is a predicate, and $c \in \IRIs \cup
\Literals$ is a constant.
\begin{align*}
    \varphi
  \gDef \
  & \bot
  \gMid \top
  \gMid \hasvalue(c)
  \gMid s
  \gMid \neg \varphi
  \gMid \\ & \varphi \lor \varphi
  \gMid \varphi \land \varphi
  \gMid \exists p. \varphi
  \gMid \forall p. \varphi
  \gEnd
  \end{align*}
A \emph{simple shape language catalogue} or \emph{\textsf{SSL} catalogue} $\cat$ is a partial function mapping shape names to shapes such that $\dom{\cat}$ is finite. We say that the shape name $s$ is \emph{declared} in $\cat$ if $\cat(s)$ is defined; that is,
$s \in \dom{\cat}$. 
We require that every shape name that appears in a shape in (the range of) $\cat$ is also declared in $\cat$. 
For convenience, we usually view $\cat$ as the set of \emph{shape declarations} of the form $\decl{s}{\varphi}$ such that $\cat(s) = \varphi$. By $\const(\cat)$ we denote the set of constants $c$ such that $\hasvalue(c)$ appears in some shape in $\cat$. 

The semantics of a  catalogue $\cat$ on a graph $\graph$ is defined in terms of \emph{shape assignments}, which indicate which shape names hold in which nodes of $\graph$. 
%
A \emph{shape assignment} for $\cat$ and $\graph$ is a relation $\alphaRel \subseteq \dom{\cat} \times (\nodes(\graph) \cup \const(\cat))$.
We define the semantics of shape $\varphi$ in catalogue $\cat$ on graph $\graph$ under assignment $\alpha$ inductively on the structure of $\varphi$ as the set $\semdelta{\varphi}{\alphaRel}{\graph} \subseteq \nodes(\graph) \cup \const(\cat)$ in Table~\ref{tab:simple-shapes-satisfaction}.
We write $\alpha(s)$ for the set $\big\{u \mid (s,u) \in \alpha\big\}$.
We call $\alphaRel$ \emph{correct} if $\alpha(s) = \semdelta{\cat(s)}{\alphaRel}{\graph}$ for every $s \in \dom{\cat}$.


\begin{table}[tb]
\setlength{\tabcolsep}{1pt}
    \centering
    \begin{tabular}{p{2cm}l}
    \toprule
    $\varphi'$ & $\semdelta{\varphi'}{\alphaRel}{\graph}$ \\
    \midrule
    $\bot$ & $\emptyset$\\
    $\top$ & $\nodes(\graph) \cup \const(\cat)$\\
    $\test(c)$ & $\{c\}$\\    
    $s$ & {$\alpha(s)$} \\ 
    $\neg \varphi$ & $\left(\nodes(\graph) \cup \const(\cat)\right) \setminus \semdelta{\varphi}{\alphaRel}{\graph}$\\
    $\varphi_1 \lor \varphi_2$ & $\semdelta{\varphi_1}{\alphaRel}{\graph} \cup \semdelta{\varphi_2}{\alphaRel}{\graph}$\\
    $\varphi_1 \land \varphi_2$ & $\semdelta{\varphi_1}{\alphaRel}{\graph} \cap \semdelta{\varphi_2}{\alphaRel}{\graph}$\\
    $\exists p.\varphi$ & 
    $\big\{u  \;\big\vert\;   \text{for some } (u,p,v) \in \graph, v \in \semdelta{\varphi}{\alphaRel}{\graph} \big\}$ \\
    $\forall p.\varphi$ & $\big\{u \;\big\vert\; \text{for all } (u, p,v) \in \graph, v \in \semdelta{\varphi}{\alphaRel}{\graph} \big\}$ \\
    \bottomrule
    \end{tabular}
    \caption{Semantics of shapes.}
    \label{tab:simple-shapes-satisfaction}
\end{table}

\begin{example}
\label{ex:assignments}
Consider the graph $\graph$ and shape catalogue $\cat$ in Fig.~\ref{fig:example-shape-assignment}.
Intuitively, the shape $\cat(s_1)$ captures the literal "d", $\cat(s_2)$ nodes having a $q$-edge to a node of shape $s_1$, and $\cat(s_3)$ nodes that have a $p$-edge to a node of shape $s_2$.
The assignment $\alphaRel_1 = \{(s_1,\text{"d"}),(s_2,b),(s_3,a)\}$ is correct.
The assignment $\alphaRel_2 = \{(s_1,\text{"d"}), \allowbreak (s_2,b),(s_3,a),(s_3,b)\}$ is not correct because of the pair $(s_3, b)$.
Indeed, under assignment $\alphaRel_2$, only the node $a$ has a $p$-edge to a node assigned to $s_2$.
That is, $\alpha_2(s_3) = \{a,b\} \neq \{a\} = \semdelta{\cat(s_3)}{\alphaRel_2}\graph$.
\end{example}

\begin{figure}
    \centering
    \begin{minipage}{0.3\linewidth}
        \begin{tikzpicture}[scale=0.7,->, >=Stealth, node distance=2cm, every node/.style = {font=\normalsize}]

        \node (a) at (0,0) {$a$};
        \node (b) at (2,0) {$b$};
        \node (d) at (4,0) {"d"};
    
        \path (a) edge [bend left] node[above] {$p$} (b);
        \path (b) edge [bend left] node[below] {$p$} (a);
        \path (b) edge node[above] {$q$} (d);
        \end{tikzpicture}
    %
    \end{minipage}
    \hspace{0.08\linewidth}
    \begin{minipage}{0.55\linewidth}
    \begin{align*}
    \cat = \{ & \decl{s_1}{\test(\text{"d"})},\\
    & \decl{s_2}{\exists q.s_1}, \ \  \decl{s_3}{\exists p.s_2}\}
    \end{align*}
    \end{minipage}
    \caption{A graph $\graph$ and a catalogue $\cat$ with $a,b,p,q \in \IRIs$, $\text{"}\mathrm{d}\text{"} \in \Literals$, and $s_1,s_2,s_3 \in \Names$.}
    \label{fig:example-shape-assignment}
    \vspace{-4mm}
\end{figure}
Next, we consider an example of a recursive catalogue.
\begin{example}\label{ex:gentlerecursion}
Consider 
graph $\graph$ in Fig.~\ref{fig:example-shape-assignment} and 
catalogue $\cat = \{\decl{s}{\exists p.s}\}$. Then
assignment $\alpha_3 = \{(s,a),(s,b)\}$ is correct.
\end{example}
Indeed, the catalogue $\cat$ from Example~\ref{ex:gentlerecursion} is recursive, because the shape in the declaration of $s$ uses $s$ again. In general, a catalogue $\cat$ is recursive if there is a directed cycle in the graph with nodes $\dom{\cat}$ and having an edge $(s,t)$ if and only if $t$ occurs in $\cat(s)$.

\SSL is intentionally simple, yet sufficient to illustrate how the choice of the semantics for recursion influences the expressive power of shape catalogues. As we will see later, the RDF schema languages ShEx and SHACL have more features, in particular for constraining the immediate neighbourhood of nodes (which in \SSL is reduced to $\exists p. \varphi$ and $\forall p. \varphi$ 
for testing the existence of a predicate whose object satisfies a property). 
We discuss in Section~\ref{sec:part2} how our definition relates to real-world ShEx and SHACL.




\subsection{Classic Recursive Semantics}
\label{ssec:semantics}\label{sec:classic-recursive}


We now describe three classic semantics for recursion: \emph{supported model semantics} (\SMS), \emph{greatest fixpoint} (\GFP), and \emph{least fixpoint} (\LFP). 
%
Each semantics specifies which shape assignments \emph{conform} to the shape catalogue. The most permissive is \SMS: it allows all correct shape assignments. 

\begin{definition}
A shape assignment $\alphaRel$ for catalogue~$\cat$ and graph~$\graph$ \emph{conforms to $\cat$ under the supported model semantics} (\emph{\SMS-conforms}), if $\alphaRel$ is correct wrt.~$\cat$. We write $\denot{\CName}[\GName][\SMS]$ for the set of all such $\alpha$.
\end{definition}

The downside of the supported model semantics is that 
it allows 
multiple conforming shape assignments, which can be counter-intuitive and computationally expensive. 

\begin{example}
\label{ex:multiple-correct}
Continuing Ex.~\ref{ex:gentlerecursion} with catalogue $\cat = \{\decl{s}{\exists p.s}\}$, notice that in addition to $\alpha_3$, also $\alpha_4 = \emptyset$ is correct. 
\end{example}


Arbitrarily choosing a single canonical shape assignment among the correct ones is problematic. Two natural ideas that come to mind are: be minimalistic and pick the \emph{smallest} correct assignment (with respect to set inclusion), or be generous and pick the \emph{largest} correct assignment. These ideas ultimately lead to two classical semantics,  \LFP and \GFP, but getting there requires some work because in general there might be multiple smallest and multiple largest correct shape assignments, which we illustrate next. 

\begin{example}
Consider $\graph$ in Fig.~\ref{fig:example-shape-assignment} with catalogue $\cat = \{\decl{s}{\exists p.\, \neg{s}}\}$. Both assignments $\alpha_1 = \{(s,a)\}$ and $\alpha_2 = \{(s,b)\}$ are correct, but neither $\beta_1 = \emptyset$ nor $\beta_2 = \{(s,a),(s,b)\}$ are, showing that $\alpha_1$ and $\alpha_2$ are simultaneously smallest and largest at the same time.
%
\end{example}

Single smallest and largest correct assignments are guaranteed in a special case: when the catalogue does not use negation.\footnote{{There are also weaker  sufficient conditions for monotonicity.}} This follows from the monotonicity of an associated operator ``$\mapsto$'', mapping shape assignments to shape assignments, defined as 
\[\alphaRel \mapsto \big\{(s,v) \in \dom{\cat} \times \mathsf{NC} \mid v \in \semdelta{\cat(s)}{\alphaRel}{\graph}\big\}\]
\[ \text{where } \mathsf{NC}  = \nodes(\graph) \cup \const(\cat)\,.\]
Indeed, it is not hard to see that a shape assignment $\alphaRel$ is correct iff it is a fixpoint of this operator, that is, if $\alphaRel \mapsto \alphaRel$. When the catalogue does not use negation, the operator is monotone and, by the \emph{Knaster–Tarski theorem}~\cite{Tar55}, it has unique least and greatest fixpoints.  Moreover, by the \emph{Kleene fixpoint theorem}~\cite{Tar55}, the least and greatest fixpoint can be computed by iterating the operator on, respectively, the empty shape assignment, 
\[ \emptyset \mapsto \alpha_1 \mapsto \alpha_2 \mapsto \dots\,,\]
and  the ``full'' shape assignment $\Omega = \dom{\cat} \times (\nodes(\graph) \cup \const(\cat))$, 
\[ \Omega  \mapsto \alpha_1 \mapsto \alpha_2 \mapsto \dots\,.\]
This allows us to define the least and greatest fixpoint semantics for catalogues that do not use negation. 

\begin{definition}
Let $\cat$ be a catalogue that does not use negation. 
A shape assignment $\alphaRel$ for catalogue $\cat$ and graph $\graph$ \emph{conforms to $\cat$ under the \LFP (resp.\ \GFP) semantics}, if $\alphaRel$ is the smallest (resp.\ largest) correct assignment w.r.t. $\cat$. We 
also say  that $\alphaRel$ \LFP-conforms (resp. \GFP-conforms) to $\cat$. 
\end{definition}
\noindent Examples~\ref{ex:gentlerecursion} and \ref{ex:multiple-correct} show that the \LFP and \GFP semantics of the same catalogue $\cat$ can be different. 

There is a canonical way of extending any semantics from catalogues without negation to catalogues where  negation is used in a controlled fashion. This approach is called \emph{stratification} and we explain it next. 
Consider a catalogue $\cat$ without negation and assume that we have a unique shape assignment that conforms to $\cat$ (as is the case for the \LFP and \GFP semantics). That is, we know what each shape name in $\cat$ means.  We could now allow using these shape names in declarations of new shape names, both positively and negatively, and when defining the semantics of the new shape names, treat the semantics of the old ones as fixed (called \emph{frozen} in the literature). Let us illustrate this with an example. 

\begin{example}\label{ex:frozen}
Under $\LFP$, the shape assignment that conforms to the catalogue $\{\decl{r}{\test(a) \lor \exists p. r}\}$ associates shape $r$ to nodes that can reach node $a$ using $p$-edges; for example, on graph $\graph$ of Fig.~\ref{fig:example-shape-assignment}, we get $\{(r,a),(r,b)\}$. Let us fix this meaning of $r$ and extend the catalogue  with declaration $s:  \lnot r \lor \exists q.s$. If we now apply the $\LFP$ semantics, with the meaning of $r$ frozen, as described above, the conforming assignment associates $s$ to nodes that can reach via $q$-edges a node that is \emph{safe}, in the sense that it cannot reach $a$ via $p$-edges. On graph $\graph$ in Fig.~\ref{fig:example-shape-assignment} we get  $\{(r, a), (r, b), (s,b), (s,\text{"d"}) \}$. 
\end{example}



Following this idea one can generalize the \LFP and \GFP semantics to \emph{stratified} catalogues, defined below. 

\begin{definition}\label{def:stratification}
A \emph{stratification} for a catalogue $\cat$ is a partitioning of $\dom{\cat}$ into sets $\Sigma_0, \Sigma_1, \dots, \Sigma_n$, called \emph{strata}, such that for all $i$, declarations of shape names from stratum $\Sigma_i$ only use shape names from $\Sigma_0 \cup\Sigma_1 \cup \dots \cup \Sigma_i$, and are allowed to use negation only in expressions of the form $\lnot\hasvalue(c)$ with $c \in \IRIs \cup
\Literals$ and $\lnot s$ with $s\in \Sigma_0 \cup \Sigma_1 \cup \dots \cup \Sigma_{i-1}$. A catalogue $\cat$ is \emph{stratified} if there is a stratification for $\cat$. 
\end{definition}

In Example~\ref{ex:frozen}, we can take $\Sigma_0 = \{r\}$ and $\Sigma_1=\{s\}$  for the extended catalogue.  
Given a stratification $\Sigma_0, \Sigma_1, \dots, \Sigma_n$ for a catalogue $\cat$, we can define the conforming shape assignment inductively for shape names from $\Sigma_i$ for $i=0, 1, \dots, n$, at each level substituting shape names from lower levels with their already computed semantics. While stratifications are not unique, 
the resulting $\LFP$ and $\GFP$ semantics do not depend on their choice (see, e.g., \cite{Apt88}).

\begin{definition} 
\label{def:fixpoint-stratified}
Consider a graph $\graph$ and a catalogue $\cat$ with stratification $\Sigma_0, \Sigma_1, \dots, \Sigma_n$. Let $\cat_i$ be obtained by restricting $\cat$ to $\Sigma_0 \cup \Sigma_1 \cup \dots \cup \Sigma_i$; that is, $\cat_i$ consists of declarations from $\cat$ for shape names from $\Sigma_0 \cup \Sigma_1 \cup \dots \cup \Sigma_i$. 
Define \[\alpha_i \subseteq \Sigma_i \times (\nodes(\graph)\cup\const(\cat))\] iteratively for $i=0, 1,\dots, n$, as follows.
Assuming that $\alpha_0, \dots, \alpha_{i-1}$ are already known, 
let $\alpha_i$ be the least shape assignment over $\Sigma_i$ such that $\alpha_0 \cup \alpha_1 \cup\dots \cup \alpha_i$ is correct for $\cat_i$. 
We say an assignment $\alpha$ for $\graph$ and $\cat$ 
\emph{\LFP-conforms to $\cat$} if $\alpha = \alpha_0 \cup \alpha_1 \cup\dots \cup \alpha_n$. Notice that, by definition, $\alpha$ is unique. We write this $\alpha$ as $\denot{\CName}[\GName][\LFP]$.
The shape assignment  $\denot{\CName}[\GName][\GFP]$  
\emph{\GFP-conforming to $\cat$}
is defined analogously, except that for $\alpha_i$ we take the largest shape assignment over $\Sigma_i$ s.t. $\alpha_0 \cup \alpha_1 \cup\dots \cup \alpha_i$ is correct for $\cat_i$. 
\end{definition}

\begin{example}\label{ex:GFP}
Shape assignments that \GFP-conform to catalogue $\{\decl{s}{\lnot \test(b) \land \forall p. s}\}$  associate shape $s$ to nodes that cannot reach node $b$ using $p$-edges. For $\graph$ from Fig.~\ref{fig:example-shape-assignment}, one gets $\{(s,\text{"d"})\}$.
\end{example}

The \LFP and \GFP semantics are tightly connected by a fundamental principle of \emph{duality}, which allows translations back and forth. 
%
For a stratified \SSL catalogue $\cat$ we define the \emph{dual catalogue}  \[\widetilde{\cat} = \big\{\decl{s}{\widetilde{\varphi}} \mid (\decl{s}{\varphi}) \in \cat \big\}\]
where the dual shape $\widetilde\varphi$ is defined recursively as follows, relying on $\varphi$ using $\lnot$ only in expressions 
$\lnot\hasvalue(c)$ and $\lnot s$, 
\begin{gather*}
\widetilde\bot = \top\,, \  \widetilde\top = \bot\,, \
\widetilde{\hasvalue(c)} = \lnot \hasvalue(c)\,, \ \widetilde{\lnot\hasvalue(c)} = \hasvalue(c)\,, \\
\widetilde{s} = s\,, \ \widetilde{\lnot s} = \lnot s\,, \ \widetilde{\varphi_1 \land \varphi_2} = \widetilde{\varphi_1} \lor
  \widetilde{\varphi_2}\,,\\  \widetilde{\varphi_1 \lor \varphi_2} =
  \widetilde{\varphi_1} \land \widetilde{\varphi_2}\,, 
\widetilde{\exists p.\varphi} = \forall p.\widetilde{\varphi}\,, \quad \widetilde{\forall p.  \varphi} =
  \exists p. \widetilde{\varphi}\,.
\end{gather*}
That is, $\dual*{\varphi}$ is obtained from $\varphi$ by swapping $\top$ and $\bot$,  $\hasvalue(c)$ and $\lnot\hasvalue(c)$, $\land$ and $\lor$, as well as $\exists$ and $\forall$, but $s$ and $\lnot s$ are not swapped.


\begin{example} 
Consider again the catalogue $\{\decl{s}{\lnot \test(b) \land \forall p. s}\}$ from Ex.~\ref{ex:GFP}. Shape assignments that \LFP-conform to its dual $\{\decl{s}{\test(b) \lor \exists p. s}\}$ assign $s$ to nodes that can reach $b$ using $p$-edges.
\end{example}

\begin{proposition}
\label{prop:duality-lfp-gfp}
Let $\graph$ be a graph and $\cat$ a stratified \SSL catalogue. Then, for every shape assignment $\alpha$ for $\graph$ and $\cat$,
\begin{center}
$\alpha$ \LFP-conforms to $\cat$ \quad \iff \quad 
$\Omega\setminus\alpha\:$  \GFP-conforms to $\widetilde\cat$
\end{center}
for $\Omega=\dom\cat \times (\nodes(\graph) \cup \const(\cat))$. 
\end{proposition}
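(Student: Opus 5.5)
The plan is to reduce the proposition to a purely semantic \emph{complementation lemma} for single shapes. From that lemma we get a correspondence between correct assignments of $\cat$ and of $\widetilde\cat$, which we then push through the stratification of Definition~\ref{def:fixpoint-stratified}. Throughout, write $\mathsf{NC} = \nodes(\graph)\cup\const(\cat)$. First note that $\const(\widetilde\cat)=\const(\cat)$, since dualisation only toggles $\hasvalue(c)$ and $\lnot\hasvalue(c)$, and that $\dom{\widetilde\cat}=\dom{\cat}$. Hence $\Omega$ and $\mathsf{NC}$ are the same for both catalogues.

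\emph{Step 1 (complementation lemma).} For every shape $\varphi$ that uses $\lnot$ only in the forms $\lnot\hasvalue(c)$ and $\lnot s$, and every assignment $\beta\subseteq\Omega$, we show
\[\semdelta{\widetilde\varphi}{\Omega\setminus\beta}{\graph} = \mathsf{NC}\setminus\semdelta{\varphi}{\beta}{\graph}.\]
The proof is by structural induction on $\varphi$. The cases are as follows.
\begin{itemize}
\item $\bot/\top$ and $\hasvalue(c)/\lnot\hasvalue(c)$ are immediate from Table~\ref{tab:simple-shapes-satisfaction}.
\item For $s$: $(\Omega\setminus\beta)(s)=\mathsf{NC}\setminus\beta(s)$. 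This is exactly why $s$ is \emph{not} swapped.
\item For $\lnot s$: the left side is $\mathsf{NC}\setminus(\mathsf{NC}\setminus\beta(s))=\beta(s)$, which is again the complement of $\semdelta{\lnot s}{\beta}{\graph}$.
\item $\land/\lor$ follow by De Morgan.
\item For $\exists/\forall$: $u\in\semdelta{\forall p.\widetilde\varphi}{\Omega\setminus\beta}{\graph}$ iff every $p$-successor $v$ of $u$ lies in $\mathsf{NC}\setminus\semdelta{\varphi}{\beta}{\graph}$, iff no $p$-successor lies in $\semdelta{\varphi}{\beta}{\graph}$, iff $u\notin\semdelta{\exists p.\varphi}{\beta}{\graph}$. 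Here $u$ ranges over $\mathsf{NC}$, and constants outside the graph have no successors, so they vacuously satisfy $\forall$ and fail $\exists$, consistently. The symmetric case is analogous.
\end{itemize}

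\emph{Step 2 (correctness transfers).} Let $\Sigma_0,\dots,\Sigma_n$ be a stratification of $\cat$. It is also a stratification of $\widetilde\cat$, because dualisation preserves which shape names occur and preserves the occurrences of $\lnot s$. Moreover $\widetilde{\cat_i}=(\widetilde\cat)_i$. By Step 1, for any $\beta$ over $\Sigma_0\cup\dots\cup\Sigma_i$ and any $s$ in that set, $\beta(s)=\semdelta{\cat(s)}{\beta}{\graph}$ iff $\mathsf{NC}\setminus\beta(s)=\semdelta{\widetilde{\cat}(s)}{\Omega_i\setminus\beta}{\graph}$, where $\Omega_i=(\Sigma_0\cup\dots\cup\Sigma_i)\times\mathsf{NC}$. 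Hence $\beta$ is correct for $\cat_i$ iff $\Omega_i\setminus\beta$ is correct for $\widetilde\cat_i$. Step 1 is applied with $\Omega_i$ in place of $\Omega$, which is harmless since only names in $\Sigma_0\cup\dots\cup\Sigma_i$ occur.

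\emph{Step 3 (induction over strata).} Let $\alpha_0,\dots,\alpha_n$ be the LFP strata for $\cat$ and $\gamma_0,\dots,\gamma_n$ the GFP strata for $\widetilde\cat$. We prove $\gamma_i=(\Sigma_i\times\mathsf{NC})\setminus\alpha_i$ by induction on $i$. Assuming it for $j<i$, Step 2 shows that the map $\delta\mapsto(\Sigma_i\times\mathsf{NC})\setminus\delta$ is a bijection between the following two sets:
\begin{itemize}
\item the $\delta$ over $\Sigma_i$ with $\alpha_0\cup\dots\cup\alpha_{i-1}\cup\delta$ correct for $\cat_i$;
\item the $\delta'$ over $\Sigma_i$ with $\gamma_0\cup\dots\cup\gamma_{i-1}\cup\delta'$ correct for $\widetilde\cat_i$.
\end{itemize}
The key identity here is that $\Omega_i\setminus(\alpha_0\cup\dots\cup\alpha_{i-1}\cup\delta)$ equals $\gamma_0\cup\dots\cup\gamma_{i-1}\cup((\Sigma_i\times\mathsf{NC})\setminus\delta)$ by the induction hypothesis. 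Since complementation reverses inclusion, the least element of the first set corresponds to the greatest element of the second. Both exist by Definition~\ref{def:fixpoint-stratified}, which relies on Knaster--Tarski for the negation-free-within-stratum operator. Taking unions over $i$ gives $\denot{\widetilde\cat}[\graph][\GFP]=\Omega\setminus\denot{\cat}[\graph][\LFP]$. Because both conforming assignments are unique, this yields the stated equivalence in both directions.

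The main obstacle is bookkeeping rather than depth. One must handle the complement relative to the correct universe at each stage ($\Sigma_i\times\mathsf{NC}$ versus $\Omega_i$ versus $\Omega$). One must also check that the asymmetric treatment of $s$ versus $\lnot s$ in the dual is exactly what makes Step 1 go through. I would verify those two base cases first and carefully.
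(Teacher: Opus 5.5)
Your proof is correct and follows essentially the paper's argument: the paper likewise derives from the definition of the dual that $\alpha \mapsto \alpha'$ iff $\Omega\setminus\alpha \mathbin{\tilde\mapsto} \Omega\setminus\alpha'$ (your Step~1), handles the single-stratum case, and then inducts over a stratification shared by $\cat$ and $\widetilde\cat$. The only difference is in how least is matched to greatest. The paper matches the Kleene iteration from $\emptyset$ with the one from $\Omega$, whereas you use that complementation is an inclusion-reversing bijection between the two sets of fixpoints; this avoids iteration altogether and spells out the stratified induction, which the paper only calls straightforward.
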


\begin{proof}
Suppose first that $\cat$ has a single stratum; that is, it only uses $\lnot$ in expressions of the form $\lnot\hasvalue(c)$. Let us write $\mapsto$ for the operator associated with $\cat$ and $\tilde\mapsto$ for the one associated with $\widetilde\cat$. Both operators are monotone. 
From the definition of $\widetilde\cat$ it follows that 
$\alpha \mapsto \alpha'$ iff $\Omega\setminus \alpha  \mathbin{\tilde\mapsto} \Omega \setminus \alpha'$. Hence,  
$
\emptyset \mapsto \alpha_1 \mapsto \alpha_2 \mapsto \dots$ iff
$\Omega \mapsto \Omega \setminus \alpha_1 \mathbin{\tilde\mapsto} \Omega \setminus \alpha_2 \mathbin{\tilde\mapsto} \dots$. By the Kleene Fixed-Point Theorem, $\alpha$ is  the least fixpoint of $\mapsto$ iff $\Omega\setminus \alpha$ is the greatest fixpoint of $\tilde\mapsto$.  

For a general stratified catalogue $\cat$, we note that a stratification $\Sigma_0, \Sigma_1, \dots, \Sigma_n$ for $\cat$ is also a stratification for $\widetilde \cat$, and proceed by straightforward induction following Def.~\ref{def:fixpoint-stratified}.
\end{proof}

Using Proposition~\ref{prop:duality-lfp-gfp} one can show that $\SSL$ under both $\LFP$ and $\GFP$ corresponds precisely to the alternation-free fragment of the modal $\mu$-calculus \cite{Koz83} with nominals but without atomic propositions, which means it has considerable expressive power.



\subsection{Schemas, Conformance, and Validation}
\label{sec:validation}
The real purpose of shape languages is specifying conformance of \emph{graphs} to  \emph{shape schemas}, which ensures that graphs satisfy some specified constraints. The task of checking if a given graph conforms to a schema is called \emph{validation}.

To this end, both SHACL and ShEx have a mechanism to select nodes that need to satisfy some shape: SHACL has \emph{target declarations}~\cite{KK17} and ShEx has \emph{shape maps}~\cite{PB19}. We abstract them here as \emph{selector maps}. Given a shape catalogue $\cat$, a selector map $\sel$ for $\cat$ is a finite set of \emph{selectors} of the form $\decl{s}{\sigma}$ with $s\in\dom{\cat}$ 
and $\sigma$ is a shape that does not use shape names.\footnote{Both ShEx and SHACL impose additional restrictions on $\sigma$, essentially allowing only atomic shape expressions, and do not allow $\top$; see~\cite{www25}.} A \emph{shape schema} is then a tuple $\schema=(\cat,\sel)$ with $\cat$ a shape catalogue, and $\sel$ a selector map.  Intuitively, every $\decl{s}{\sigma}$ in  $\sel$ expresses that every node $u$ that has shape $\sigma$ should also have shape $s$ under shape assignment(s) conforming to $\cat$. In order to define the semantics precisely, we need to handle the nondeterminism of $\SMS$ and agree on the domain over which $u$ ranges.


A shape assignment for catalogue $\cat$ and graph $\graph$ associates shape names with elements of  
$\nodes(\graph) \cup \const(\cat)$, which does not necessarily include all constants mentioned in $\sel$. We deal with this by incorporating the selector map into the catalogue as follows. 
We define $\cat_\sel = \cat \cup \{\decl{t_{s,\sigma}}{\sigma \land \lnot s} \mid \decl{s}{\sigma} \in \sel\}$, where every $t_{s,\sigma}$ is a fresh shape name that depends on $s$ and $\sigma$. 
Then, for a correct shape assignment $\alpha$ for $\graph$ and $\cat_\sel$,
we say a graph $\graph$ \emph{conforms} to schema $(\cat,\sel)$ under  $\alpha$ if $\alpha(t_{s,\sigma}) = \emptyset$ for every shape name $t_{s,\sigma}$.
Finally, we say that $\graph$ \emph{$\LFP$-conforms} to  $(\cat,\sel)$
if $\graph$ conforms to $(\cat,\sel)$ under $\alpha = \denot{\CName_\sel}[\GName][\LFP]$, and analogously for \GFP. 

We can define graph conformance under \SMS similarly, but we need to decide what to do with the multiple conforming shape assignments a catalogue might admit. Two approaches have been considered, called the \emph{cautious} and \emph{brave} \SMS semantics~\cite{CRS18,ACORSS20,BJ21}. We say that $\GName$ \emph{cautiously} (resp.\ \emph{bravely}) \emph{\SMS-conforms to} schema $(\cat,\sel)$ if 
$\graph$ conforms to $(\cat,\sel)$
under 
\emph{every} $\alpha \in \denot{\CName_\sel}[\GName][\SMS]$ (resp.\ \emph{some} $\alpha \in \denot{\CName_\sel}[\GName][\SMS]$). 
We use brave semantics by default and say simply that \emph{$\graph$ \SMS-conforms to $\cat$}.

\begin{example}\label{ex:separation-1}
Consider the graph $\graph = \{ (a,p,a) \}$ and the schema $(\mathcal{C}, \sel)$ where $\mathcal{C} = \{ s: \exists p. s\}$ and  $\sel = \{ s : \test(a)\}$. Under $\LFP$ semantics, we would expect the shape assignment  $\alpha_l = \emptyset$, while under $\GFP$ semantics, we get the shape assignment $\alpha _g = \{ (s,a) \}$.
Thus, $\graph$ $\GFP$-conforms to $(\mathcal{C},\sel)$, but $\graph$ does not $\LFP$-conform to
$(\mathcal{C},\sel)$.
\end{example}


\section{A Comparative Study of Implementations}
\label{sec:thr;sub:comstdschval}

\NewDocumentCommand{\rot}{O{45} O{1em} m}{\makebox[#2][l]{\rotatebox{#1}{#3}}}%

We present an experimental study of real-world engines for SHACL and ShEx to compare how they treat recursive schemas. The source code and data needed to reproduce the results of the study, as well as the raw results of our experiments, are available on Zenodo \cite{sheval}.

We want to understand, for each engine, \emph{is there a formal semantics for recursive shapes that explains its behaviour?}
The main challenge is that real-world systems only test \emph{whether a given graph conforms to a given schema}, 
without giving access to the witnessing shape assignments.

In order to detect which semantics is applied, we design test cases that aim to separate $\LFP$, $\GFP$, brave $\SMS$, and cautious $\SMS$. First, we run all test cases on all engines.
If we observe inconsistent behaviour, we investigate the test results in more detail. If all answers are consistent with one of the four considered semantics, we interpret this as the answer to our question.

We designed thirteen test cases, falling into two test categories: separation and feature tests. Separation tests attempt to distinguish whether an engine uses \LFP, \GFP, brave \SMS, or cautious \SMS. Feature tests do exactly what their name suggests. Each such test $S  = \langle \graph, \cat, \sel \rangle$ consists of a graph $\graph$, a shape catalogue $\cat$, and a shape map $\sel$. The intended use of $S$ is to produce a ShEx or SHACL schema based on $\cat$ and $\sel$ to check whether $\graph$ conforms to $(C,\sel)$.


For each separation test we list all correct shape assignments. The unique conforming assignments under $\LFP$ and $\GFP$ appear as $\alpha_{\LFP}$ and $\alpha_{\GFP}$, and the remaining ones as $\alpha_1, \alpha_2$. 
An assignment $\alpha$ is {\pass blue}, if $\graph$ conforms to $(\cat, \sel)$ under $\alpha$,  and  {\fail red}, if it does not.

\paragraph{Basic separation tests.}
These four tests are designed to distinguish a validator that employs $\LFP$ semantics from one that employs $\GFP$, using minimal examples. 
%


{ \small
\begin{itemize}
\item \texttt{bsep1}: \\
$\begin{array}[t]{l}
\graph = \{ (a,p,a) \}\,,\ \mathcal{C} =  \{ \decl{s}{\exists p . s} \}\,,\ 
\sel = \{ \decl{s}{\test(a)}\}\,, \\ 
\fail{\alpha_{\LFP} = \emptyset}\,,\ \pass{\alpha_{\GFP} = \{(s,a)\}}\,.
\end{array}$

\item \testcode{bsep2}: \\
$\begin{array}[t]{l} 
 \graph = \{(a,p,a), (b,p,b) \}\text{, } \mathcal{C} =  \{\decl{s}{\exists p . s} \}\text{, }\\
 \sel = \{ \decl{s}{\test(a)}, \decl{s}{\test(b)}\}\,,\ \\
 {\fail \alpha_{\LFP}=\emptyset}\,,\ 
 {\pass \alpha_{\GFP}=\{(s,a),(s,b)\}}\,,\ \\
 {\fail \alpha_1=\{(s,a)\}}\,,\ 
 {\fail \alpha_2 =\{(s,b)\}}\,.   
\end{array}$
\item  \testcode{bsep3}: \\
$
\begin{array}[t]{l}
\graph = \{(a,p,c), (b,p,c) \}\text{, }\mathcal{C} =  \{\decl{s}{s' \land \exists p},\ \decl{s'}{s \land \exists p} \},\\ \sel = \{ \decl{s}{\test(a)}, \decl{s}{\test(b)}\}\,,\\
{\fail \alpha_{\LFP} = \emptyset}\,,\ 
{\pass \alpha_{\GFP}  = \{(s,a), (s,b), (s',a), (s',b)\}}\,,\ \\
{\fail \alpha_1 = \{(s,a), (s',a)\}}\,,\ 
{\fail \alpha_2 = \{(s,b), (s',b)\}}\,.  
\end{array}$
\item \testcode{bsep4}:  \\
$
\begin{array}[t]{l}
\graph = \{ (a,p,a), (b,p,b) \},\\ 
\mathcal{C} = \{\decl{s}{\exists p.s}, \decl{s'}{\neg s} \},\ \sel = \{ \decl{s}{\test(a)}, \decl{s}{\test(b)}\}\,,\\
{\fail \alpha_{\LFP} = \{(s',a), (s',b)\}}\,,\ 
{\pass \alpha_{\GFP} =  \{(s,a), (s,b) \}}\,, \\    
{\fail \alpha_1 = \{(s,a), (s',b) \}}\,,\ 
{\fail \alpha_2 = \{(s,b), (s',a) \}}\,. 
\end{array}$      
\end{itemize}
}
\paragraph{Reachability separation tests.}
Then, we tested a classical recursive property of reachability and its dual property of safety by checking them on two dual shape names $r$ and $s$, in four different scenarios. All scenarios use the same graph:\\
\begin{minipage}{\linewidth}
\begin{tikzpicture}[scale=0.7,->, >=Stealth, node distance=2cm, every node/.style = {font=\normalsize}]

\node (g) at (-1,0) {$\graph$:};

\node (a) at (0,0) {$a$};
\node (b) at (3,-.3) {$b$};
\node (c) at (7,0) {$c$};
\node (d) at (3,.5) {$d$};
    
\path (a) edge [out=20,in=180] 
    (d);
\path (d) edge [out=0,in=160] 
    (c);
\path (b) edge [out=180,in=-13] 
    (a);
\path (b) edge [out=0,in=193] 
    (c);
\path (c) edge [out=177,in=-15] 
    (d);

\path (8,.5) edge (9,.5);
\node at (9.8,.5) {:\ $p$-edge};
\end{tikzpicture}
\end{minipage}
{ \small
\begin{itemize}
\item \testcode{reach1}:  \\
$
\begin{array}[t]{l} 
\mathcal{C} =  \{\decl{r}{\test(a) \lor \exists p.r}  \},\\ 
\sel = \{ \decl{r}{\test(a)}, \decl{r}{\test(b)}, \decl{r}{\test(c)}, \decl{r}{\test(d)}\}\,,\\
{\fail \alpha_{\LFP}= \{(r,a), (r,b)\}} \,,\\
{\pass \alpha_{\GFP}=\{(r,a),(r,b),(r,c),(r,d)\}}\,. \\ 
\end{array}$

\item \testcode{reach2}: \\
$\begin{array}[t]{l}
\mathcal{C} =  \{\decl{s}{\neg \test(a) \land \forall p. s},\ \decl{r}{\neg s} \}, \\
\sel = \{ \decl{r}{\test(c)}, \decl{r}{\test(d)}\}\,,\\
{\pass \alpha_{\LFP}=\{(r,a),(r,b),(r,c),(r,d)\}}\,, \\
{\fail \alpha_{\GFP}=\{(r,a), (r,b),(s,c),(s,d)\}}\,. 
\end{array}$

\item \testcode{safe1}: \\ 
$\begin{array}[t]{l}
\mathcal{C} =  \{\decl{s}{\neg \test(a) \land \forall p. s} \}\text{, }
\sel = \{ \decl{s}{\test(c)}, \decl{s}{\test(d)}\}\,,\\
{\fail \alpha_{\LFP}=\emptyset}\,,\ 
{\pass \alpha_{\GFP}=\{(s,c)(s,d) \}}\,.     
\end{array}$

\item \testcode{safe2}: \\ 
$
\begin{array}[t]{l}
  \mathcal{C} =  \{\decl{r}{\test(a) \lor \exists p. r},\ \decl{s}{\neg r} \},\\
  \sel = \{ \decl{r}{\test(c)}, \decl{r}{\test(d)}\}\,,\\
{\fail \alpha_{\LFP} = \{(r,a), (r,b),(s,c),(s,d)\}}\,,\ \\
{\pass \alpha_{\GFP} = \{(r,a),(r,b),(r,c),(r,d)\}}\,.
\end{array}$
\end{itemize}

}
\paragraph{Feature tests} These target more specific properties of the validator, as we will explain after introducing them.
{ \small
\begin{itemize}
\item \texttt{nstrat1} (non-stratified negation without cyclic data): \\
$ \graph = \{ (a,p,b), (b,p,c) ,(c,p,d), (d,p,e)  \}, 
  \mathcal{C}=\{ \decl{s}{\exists p.\neg s}  \}$,  \\
  $\sel = \{ \decl{s}{\test(a)},\ \decl{s}{\test(b)},\ \decl{s}{\test(c)},\ \decl{s}{\test(d)},\ \decl{s}{\test(e)} \}
$\\
\emph{Passing Condition}: Accept or reject the schema over $\graph_{n1}$. 

\item \testcode{nstrat2} (non-stratified negation with cyclic data): \\
$\graph = \{ (a,p,b), (b,p,a)  \}, \mathcal{C} =\{ \decl{s}{\exists p.\neg s} \}$, \\
   $\sel = \{ \decl{s}{\test(a)},\decl{s}{\test(b)}\}$\\
\emph{Passing Condition}:  Accept or reject the schema over $\graph$.

\item \texttt{fresh} (fresh constant support):\\ $\graph  = \{  (a,p,b), \allowbreak (b,p,c), \allowbreak (c,p,a) \}$, $\mathcal{C} = \{ \decl{s}{\top} \}$, \\ 
$\sel = \{ \decl{s}{\test(d)} \}$\\
\emph{Passing Condition}: The validator accepts. 

\item \texttt{cons1} (consistency): $\graph = \{ (a,p,a)\}$, $\mathcal{C}=\{ \decl{s}{\exists p.s'}, \decl{s'}{\neg s} \}$,\\ $\sel = \{ \decl{s}{\test(a)}, \decl{s'}{\test(a)}\}$\\
\emph{Passing Condition}: The validator rejects.

\item \testcode{cons2} (consistency): $\graph \text{ as in \testcode{nstrat2}},\ \mathcal{C} \text{ as in \testcode{cons1}}$,\\ 
$\sel = \{ \decl{s}{\test(a)},\decl{s'}{\test(a)},\decl{s}{\test(b)},\decl{s'}{\test(b)} \}$\\
\emph{Passing Condition}: The validator rejects.
\end{itemize}
}
The tests \testcode{nstrat1} and \testcode{nstrat2} are for shape catalogues that do not permit a stratification as in \Cref{def:stratification}. While none of our semantics support this setting, we still aim to find out how real-world validators behave on such inputs. The test \testcode{fresh} has a shape catalogue with a shape assignment that is trivially satisfied, using the shape $\top$. In the shape map, we require that this shape is satisfied in a ``fresh node'', that is, a node that is not featured in the graph. This reflects something that both SHACL and ShEx permit: selecting nodes outside the input graph. 
Finally, \testcode{cons1} and \testcode{cons2} check for \emph{logical consistency}. We have a catalogue with two shapes, $s$ and $s'$, where $s'$ is defined as the negation of $s$. Then we ask in the shape map that both $s$ and $s'$ be assigned to the same node. Since no such assignment is possible while being consistent with the catalogue's semantics, this test requires that a validator rejects.



\paragraph{Validation Engines}
Since the SHACL W3C Recommendation does not require SHACL validators to support recursive schemas, many validators reject them all or raise a warning indicating that recursion is not supported. We focus on those that do support recursive SHACL schemas or where we can ignore the warnings and perform the validation anyway, meaning that they accept a non-empty set of graphs when validating against a recursive SHACL schema. 
These are pySHACL~\cite{pySHACL},  SHACL-S~\cite{shacls}, Jena SHACL~\cite{jenacl},  Topbraid~\cite{tq}, and rudof~\cite{rudof}, which supports both SHACL and ShEx. 
For ShEx, we have ShEx-S~\cite{shexs}, Jena ShEx~\cite{jenacl}, and rudof~\cite{rudof}.

\begin{table}
\small
\setlength{\tabcolsep}{1pt}
\centering 
\begin{tabular}{l@{\hspace{2ex}}ccc@{\hspace{3mm}}ccccc@{\hspace{4mm}}cccc}
    \toprule
    \rot{Test instance}  & \rot{rudof (ShEx)} & \rot{Jena ShEx} & \rot{ShEx-S}  & \rot{pySHACL} & \rot{SHACL-S} & \rot{Jena SHACL} & \rot{Topbraid}  & \rot{rudof (SHACL)} & \rot{$\GFP$} &   \rot{$\LFP$} & \rot{b$\SMS$} & \rot{c$\SMS$} \\
     \midrule
     \texttt{bsep1}    &     \y &        \y &         \y & \y      &      \y &         \y &   \y &           \y      &\y & \n & \y & \n \\ 
     \texttt{bsep2}    &     \y &        \y &         \y & \y      &      \y &         \y &   \y &           \y       &\y & \n & \y & \n\\ 
     \texttt{bsep3}    &     \y &        \y &            \y & X      &      X &         \y &   \n &           X      &\y& \n &\y&\n\\ 
    \texttt{bsep4}     &     \y &        \y &            \y & \y      &      \y &         \y &   \y &           \y     &\y& \n &\y&\n\\ 
     \texttt{reach1}  &     \y &        \y &            \y & \y      &      \y &         \y &   \n &           \y     &\y& \n &\y&\n\\ 
     \texttt{reach2} &     \n &        \n &            \n & \y      &      \y &         \y &   \y &           X     &\n& \y &\y&\n\\ 
     \texttt{safe1} &     \y &        \y &            \y & \y      &      \y &         \y &   \y &           X     & \y & \n & \y & \n \\ 
     \texttt{safe2}   &     \y &        \y &            \y & \y      &      \y &         \y &   \n &           \y     & \y & \n & \y & \n\\ 
     \midrule

     \texttt{nstrat1}    & \xxmark & \cmark &  \xxmark & \cmark  & \cmark & \cmark & \cmark & \cmark & -  & - &  - & - \\ 
     \texttt{nstrat2}    & \xxmark & \cmark &  \xxmark & \cmark  & \cmark & \cmark & \cmark & \cmark  & -  & - &  - & -\\ 
     \texttt{fresh}       & \xmark  & \cmark &  \cmark     & \cmark  & \cmark & \cmark & \cmark & \cmark  & -  & - &  - & - \\      
     \texttt{cons1}       & \xxmark & \cmark &  \xxmark     & \cmark  & \cmark & \cmark & \cmark & \cmark  & -  & - &  - & -\\ 
     \texttt{cons2}       & \xxmark & \xmark &  \xxmark     & \cmark  & \cmark & \cmark & \cmark & \cmark & -  & - &  - & -\\ 
    \bottomrule
\end{tabular}
    \caption{Results of comparative study of the semantics of SHACL and ShEx validators for recursive shape catalogues. }
    \label{tab:exampleExpTable}
\vspace{-4mm}
\end{table}

\paragraph{Discussion} We see the results of our experiments in \Cref{tab:exampleExpTable}. 
For each of the separation tests (first eight rows), a validator can answer  \emph{yes} ({\y}) or \emph{no} ({\n}), or fail (X) by reporting an error, 
crashing, or not terminating within 5 seconds.\footnote{Our test graphs have at most 5 nodes and edges, so the timeout is rather generous.}
In the columns \GFP, \LFP, b\SMS (brave \SMS), and c\SMS (cautious \SMS), we indicate the answer prescribed by the respective semantics. For \texttt{bsep1}, for example, $\graph$ does not conform to $(\cat,\sel)$ under ${\fail \alpha_{\LFP}}$, so the correct answer under $\LFP$ is no (\n). However, $\graph$ does conform to $(\cat,\sel)$ under ${\pass \alpha_{\GFP}}$, so the correct answer under $\GFP$ is yes (\y).  The answer yes is also correct under b\SMS, because ${\pass \alpha_{\GFP}}$ witnesses it. However, the answer no is correct under c\SMS because ${\fail \alpha_{\LFP}}$ witnesses it: it is a correct assignment that does not associate $s$ to $a$ as required by $\sel$. An analogous reasoning applies for all values of yes (\y) and no (\n) in the last four columns.

Therefore, using Table~\ref{tab:exampleExpTable}, we can immediately see which validators are consistent with which semantics. We see that the three ShEx validators, rudof, Jena ShEx and ShEx-S, are consistent with \GFP, as prescribed by the official semantics \cite{shex_standard}. On the SHACL side, the picture is more varied. Four validators, namely pySHACL, SHACL-S, Jena SHACL and rudof, are consistent with b\SMS, if we ignore the error cases. The behaviour of Topbraid stands out. It is not consistent with either \GFP, \LFP, b\SMS, or c\SMS, as can be seen. Topbraid thus seems to follow a unique semantics.


For the five feature tests we have two outcomes, pass ({\pass\ding{51}}) or fail ({\fail\ding{55}}), since by design these tests interpret errors unrelated to validation
as failing the test. We will nonetheless indicate failures due to these causes by an asterisk ({\fail \ding{55}*}). 
The failures for the ShEx validators can be explained by the fact that all feature tests, except for \texttt{fresh}, use non-stratified negation, which is disallowed in ShEx. ShEx-S and rudof (ShEx) correctly reject such inputs. JenaShEx does not reject them and attempts validation. We see that it passes all tests, except for the second test for logical consistency.
Perhaps this asymmetric behaviour of JenaShEx for two very similar tests can be attributed to a software bug. One notable result is that rudof does not pass \texttt{fresh}; indeed it is the only validation engine to fail it, and curiously, rudof itself, when targeting SHACL, \emph{does} pass it. 
On the SHACL side, all validators pass all the feature tests.

Note that our test results do not prove that any of the validators follows a given semantics, as other tests might reveal inconsistencies. They do prove, however, that some validators do not follow a given semantics, and they give us hints as to how these real-world systems act. 
Claiming that their semantics follows \GFP, \LFP, b\SMS, or c\SMS requires knowledge of the algorithm each system implements, and a formal study of said algorithm.


\section{Recursion in ShEx and SHACL}
\label{sec:part2}\label{sec:expressive}

Building on the framework developed in Section~\ref{sec:semantics}, we now explore how the choice of semantics for recursion impacts the expressive power and complexity of ShEx and SHACL. We give semantics-preserving translations between  core fragments of the two languages under $\GFP$/$\LFP$
and establish tight complexity bounds  exposing the cost of $\SMS$.


\subsection{The ShEx and SHACL Schema Languages}
\label{sec:full-shex-shacl}

We instantiate our abstract framework with the standard RDF validation languages ShEx \cite{shex_standard} and SHACL \cite{KK17}.
We see them as variants of \SSL that allow different kinds of shapes: {most notably, SHACL shapes use complex path expressions to specify paths in terms of sequences of predicates, and ShEx shapes use triple expressions to specify neighbourhoods in terms of sets of triples.}
The notions of shape catalogues and shape assignments naturally extend to ShEx and SHACL.
Below we present the syntaxes of these languages as they were presented in \cite{www25}, to which we add recursion via references to shape names, as in \SSL.
Their semantics are combinations of the semantics of (non-recursive) shapes from \cite{www25} with the semantics of recursion we have defined for \SSL. 

Both languages use \emph{value type} constraints that are interpreted as sets of constants. 
Formally, $\ValueTypes$ is a finite set of value types, and for every $\vtype \in \ValueTypes$, its semantics is $\sem{\vtype} \subseteq \IRIs \cup \Literals \cup \Blanks$.
Throughout this subsection, we fix an RDF graph~$\graph$.

\begin{definition}[SHACL syntax and semantics]
\label{def:shacl-syntax-semantics}
\label{def:shacl-syntax}
SHACL \emph{shapes} $\varphi$ and \emph{path expressions} $\pi$ are given by the following grammar, with $c \in \IRIs \cup \Literals$, $\vtype \in \ValueTypes$, $Q \subseteq_{\text{fin}} \IRIs$, 
$p \in \IRIs$, $n \in \mathbb{N}$, $s \in \Names$:
    \begin{align*}
    \varphi
  \gDef \
  & \top \gMid \bot
  \gMid \hasvalue(c)
  \gMid \test(\vtype)
  \gMid s
  \gMid \closed(Q)
  \gMid \neg \varphi
  \gMid \varphi \land \varphi \\
  & \varphi \lor \varphi \gMid 
   \eq(\pathExpr, p)
  \gMid \disj(\pathExpr, p)
  \gMid \geqn{n}{\pathExpr}{\varphi}
  \gMid \leqn{n}{\pathExpr}{\varphi}
  \gEnd \\
  \pi \gDef & \id
  \gMid p
  \gMid \pathExpr^{-}
  \gMid \pathExpr \cdot \pathExpr
  \gMid \pathExpr \cup \pathExpr
  \gMid \pathExpr^{*}
 \gEnd
  \end{align*}
The semantics $\semdelta{\varphi}{\alpha}{\graph} \subseteq \nodes(\graph) \cup \const(\cat)$ of a SHACL shape $\varphi$ in catalogue $\cat$ on graph $\graph$ under assignment $\alpha$ is defined by recursion on the structure of $\varphi$, by extending Table~\ref{tab:simple-shapes-satisfaction} with
\begin{align*}
    \semdelta{\test(\vtype)}{\alpha}{\graph} = &\ 
    \sem{\vtype},\\
    \semdelta{\geqn{n}{\pi}{\varphi} }{\alpha}{\graph}= & \ \big\{v \;\,\big\vert\;\, \#\big\{ u \in \sem{\pi}^{\graph}_v \mid u \in \semdelta{\varphi}{\alpha}{\graph}\big\} \ge n \big\},\\
    \semdelta{\leqn{n}{\pi}{\varphi} }{\alpha}{\graph}= & \ \big\{v \;\,\big\vert\;\, \#\big\{ u \in \sem{\pi}^{\graph}_v \mid u \in \semdelta{\varphi}{\alpha}{\graph}\big\} \le n \big\},\\
    \semdelta{\closed(Q)}{\alpha}{\graph} = &\ \big\{ v \;\big\vert\; 
     \sem{p}^\graph_v = \emptyset\;\text{for all}\; p\in \IRIs\setminus Q\big\}, \\
    \semdelta{\eq(\pi, p)}{\alpha}{\graph} = &\ \big\{v \;\big\vert\; \sem{\pi}^\graph_v = \sem{p}^\graph_v \big\},\\ 
    \semdelta{\disj(\pi, p)}{\alpha}{\graph} = &\ \big\{ v \;\big\vert\; \sem{\pi}^\graph_v \cap \sem{p}^\graph_v  = \emptyset \big\},
\end{align*}
where 
$\sem{\pi}^{\graph}_v$ is the set of nodes accessible in $\graph$ from $v$ by a sequence of edges whose predicates match the path expression $\pi$; that is, $\sem{\pi}^{\graph}_v = \big\{u \;\big\vert\; (v, u) \in \sem{\pi}^{\graph}\big\}$ and $\sem{\pi}^{\graph}$ is as in \cite{www25}.
\end{definition} 


\begin{definition}[ShEx syntax and semantics]
        \label{def:shex-syntax-semantics}
    \label{def:shex-syntax}
    ShEx \emph{shapes} $\varphi$, \emph{triple expressions} $e$, and \emph{closed triple expressions} $f$, are defined by the following grammar, with $c \in \IRIs \cup \Literals$,  $\vtype \in \ValueTypes$, $s \in \Names$, $p \in \IRIs$, and $R, Q \subseteq_{\text{fin}} \IRIs$:
\begin{align*}
		\varphi  \gDef & \test(c)
        \gMid \test(\text{$\vtype$})
        \gMid \shexref s
        \gMid \shexneigh{e}
		\gMid \varphi \land \varphi
		\gMid \varphi \lor \varphi
		\gMid \lnot \varphi
		\gEnd \\
		e \gDef & f \shexeach (\shexneginv{R})^{*} \ \gMid\   f \shexeach (\shexneginv{R})^{*} \shexeach (\shexneg{Q})^{*} \gEnd \\
        f \gDef & \varepsilon
		\gMid p.\varphi
		\gMid \shexinverse p.\varphi
		\gMid f \shexeach f
		\gMid f \shexone f
        \gMid f^{*}
		\gEnd
	\end{align*}
    The semantics  $\semdelta{\varphi}{\alpha}{\graph} \subseteq \nodes(\graph) \cup \const(\cat)$ of a ShEx shape $\varphi$ in catalogue $\cat$ on graph $\graph$ under  assignment $\alpha$ is  defined by recursion on the structure of $\varphi$, by extending Table~\ref{tab:simple-shapes-satisfaction} with
    \[
    \semdelta{\shexneigh{e}}{\alpha}{\graph} = \big\{ \, v \;\big\vert\; \neigh^{\pm}_\graph(v) \in \semdelta{e}{\alpha}{\graph}\big\},
    \]
    where $\neigh^{\pm}_\graph(v)$ 
    collects edges $(v, p, v') \in \graph$ outgoing from $v$ and inverses $(v, p^{-}, v')$ of edges $(v', p, v) \in \graph$ incoming to $v$, 
    and $\semdelta{e}{\alpha}{\graph}$ is the family of sets of edges and inverses of edges from $\graph$ that match $e$, defined recursively as follows. 
    Expression $\varepsilon$ matches $\emptyset$. Expressions $p.\varphi$ and $p^-.\varphi$ match singletons $\{(u,p,w)\}$ and $\{(u,p^-,w)\}$ such that $w \in \semdelta{\varphi}{\alpha}{\graph}$. Expression $f_1 \shexone f_2$ matches sets matched by $f_1$ and sets matched by $f_2$. Expression $f_1 \shexeach f_2$ matches {sets that can be expressed as a union of two} disjoint sets $F_1$ and $F_2$ matched by $f_1$ and $f_2$, respectively. Expression $f^*$ matches {sets that can be expressed as a union of} arbitrarily many disjoint sets matching $f$. Expression $(\lnot Q)^*$ matches arbitrary sets of $p$-edges with $p\notin Q$ and $(\lnot R^-)^*$ matches arbitrary sets of $p^-$-edges with $p\notin R$.
    \end{definition}

{Intuitively, a set of triples matched by a closed triple expression $f$ may contain only triples explicitly mentioned in $f$. A closed expression cannot be used directly in a ShEx shape: it can only occur as a subexpression of an expression $e$ which opens allowed set of  triples to all $p^-$-edges with $p\notin R$ and possibly also to all $p$-edges with $p\notin Q$.}


Stratification for ShEx and SHACL is defined as for \SSL. (Using auxiliary shape definitions, one can rewrite each schema so that it applies negation only to sub-expressions of the form $\test(c)$, $\test(\vtype)$ and $s$.) Following standard ShEx, we assume \GFP by default and say that a graph conforms to a ShEx schema if it \GFP-conforms.
\subsection{Expressive Power}
\label{sec:comparison-shex-shacl}

We now identify large equi-expressive syntactic fragments of ShEx (with \GFP) and SHACL with \LFP that have the same expressive power.
In the case of ShEx, we forbid applying `*' to triple expressions that contain  `$\shexeach$'. 

\begin{definition}[Restricted ShEx] \emph{Restricted ShEx} is the syntactic fragment of ShEx obtained by replacing the rule 
for closed triple expressions $f$ from Def.~\ref{def:shex-syntax} with the following derivation rules:
\begin{align*}
    f \gDef & f' \gMid f'^{*} \gMid f \shexeach f \gMid f \shexone f
    \gEnd \\
    f' \gDef &
    \varepsilon \gMid p.\varphi \gMid \shexinverse{p}.\varphi \gMid f' \shexone f'
    \gEnd 
\end{align*}
\end{definition}

In the case of SHACL, we restrict counting to single edges, but allow existential quantification over arbitrary path expressions. We also disallow  $\eq$ and $\disj$, as they are not expressible in ShEx: they can be used to enforce or forbid a diamond pattern, whereas ShEx shapes are invariant under unravelling the graph.

\begin{definition}[Restricted SHACL] \label{def:restricted-shacl}\emph{Restricted SHACL} is the syntactic fragment of SHACL defined by the grammar
\begin{align*}
\varphi
\gDef \
& \top
\gMid \bot
\gMid \hasvalue(c)
\gMid \test(\vtype)
\gMid s
\gMid \closed(Q)
\gMid\\
& \neg \varphi
\gMid \varphi \land \varphi
\gMid \varphi \lor \varphi
\gMid \geqn{n}{p}{\varphi}
\gMid \geqn{n}{p^{-}}{\varphi}
\gMid\\
& \leqn{n}{p}{\varphi}
\gMid \leqn{n}{p^{-}}{\varphi}
\gMid \exists \pathExpr.{\varphi}
\gEnd
\end{align*}
\end{definition}

Having defined the restricted fragments,  we can formulate the equivalence result. 

\begin{theorem}
    \label{thm:equivalent-shex-shacl} 
    Restricted ShEx (with $\GFP$) and restricted SHACL with $\LFP$ have the same expressive power and there are effective back-and-forth translations between them.
\end{theorem}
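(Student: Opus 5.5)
The plan is to factor both directions through a common intermediate language: stratified $\SSL$ extended with graded modalities over single edges and inverse edges, $\test(\vtype)$ and $\closed(Q)$. Call it $\SSL^{+}$. For $\SSL^{+}$ the duality of Proposition~\ref{prop:duality-lfp-gfp} still holds, because the dual operator extends naturally. We swap $\geqn{n}{p}{\varphi}$ with $\leqn{n-1}{p}{\widetilde\varphi}$, with $\widetilde{\geqn{0}{p}{\varphi}}=\bot$, and symmetrically for $p^-$. We also treat $\test(\vtype)$ and $\closed(Q)$ like $\test(c)$, swapping each with its negation. The single-stratum proof then goes through verbatim: complementation commutes with the operator, and Kleene iteration does the rest. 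The strategy is therefore to show two things. First, restricted SHACL under $\LFP$ and $\SSL^{+}$ under $\LFP$ are equi-expressive. Second, restricted ShEx under $\GFP$ and $\SSL^{+}$ under $\GFP$ are equi-expressive. The two are then glued by the duality $\cat\mapsto\widetilde\cat$ together with complementation of assignments. On the schema level, complementation is absorbed by introducing fresh shape names $s':\lnot s$ in a higher stratum.

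\emph{SHACL side.} Restricted SHACL is $\SSL^{+}$ plus $\exists\pi.\varphi$ for arbitrary path expressions $\pi$. I will eliminate paths by induction on $\pi$, introducing fresh shape names, all under $\LFP$:
\begin{itemize}
\item $\exists \id.\varphi \equiv \varphi$;
\item $\exists p.\varphi$ and $\exists p^-.\varphi$ are $\geqn{1}{p}{\varphi}$ and $\geqn{1}{p^-}{\varphi}$;
\item $\exists(\pi_1\cdot\pi_2).\varphi \equiv \exists\pi_1.\exists\pi_2.\varphi$, and $\exists(\pi_1\cup\pi_2).\varphi \equiv \exists\pi_1.\varphi\lor\exists\pi_2.\varphi$;
\item inverses are pushed inward;
\item $\exists\pi^*.\varphi$ is replaced by a fresh $t:\varphi\lor\exists\pi.t$, which under $\LFP$ captures exactly reachability.
\end{itemize}
The new declarations are positive in $t$, so they can be placed in the same stratum as the shape names occurring in $\varphi$, and stratification is preserved. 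The converse direction is trivial, since $\SSL^{+}$ is syntactically contained in restricted SHACL.

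\emph{ShEx side.} A shape $\shexneigh{e}$ in restricted ShEx constrains the neighbourhood as follows. Closed triple expressions built from $f'$, $f'^*$, $\shexeach$ and $\shexone$ mention finitely many predicates with their target shapes. The interval counting they impose on edges of each direction-labelled predicate can be expressed by Boolean combinations of graded modalities. The exclusions $(\lnot R^-)^*$ and $(\lnot Q)^*$ become $\closed(Q)$-like constraints. For $\lnot Q$ this is exactly $\closed$ for the mentioned predicates. For the inverse part, I would note that closedness on inverse edges is expressible as $\leqn{0}{p^-}{\top}$ for the finitely many relevant $p\in R$, since unmentioned inverse predicates are left open.

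The main obstacle is this neighbourhood translation. In particular, $f\shexeach f$ and $\shexone$ with overlapping predicates require a partition of the neighbourhood, which is not directly a conjunction of counting constraints. I would handle it by the following steps:
\begin{enumerate}
\item Restrict to the case forbidding $*$ over $\shexeach$, so that each $f'$ or $f'^*$ block is a disjunction of atoms $p.\varphi_i$.
\item Guess, by a finite disjunction, the Boolean type of each neighbour with respect to the finitely many $\varphi_i$.
\item Express the existence of a valid distribution as a finite disjunction over per-type count intervals. This is a bounded integer-flow feasibility question; the exact-count thresholds are bounded by the number of non-starred atoms, and beyond them only "at least" matters.
\end{enumerate}
Conversely, each graded modality, $\test$, and $\closed(Q)$ is expressible in restricted ShEx. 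For example, $\geqn{n}{p}{\varphi}$ becomes $\shexneigh{p.\varphi\shexeach\cdots\shexeach p.\varphi\shexeach (p.\top)^*\shexeach\dots}$ with suitable opening. Negation is used for $\leqn{}{}{}$, and stratification is kept by placing these in the same stratum. Because all these translations are local and non-recursive, they preserve the fixpoint semantics stratum by stratum, and effectiveness is clear from the constructions.
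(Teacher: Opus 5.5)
Your architecture matches the paper's:
\begin{itemize}
\item $\SSL^{+}$ is essentially \microshacl{}.
\item Your inductive elimination of $\exists\pi.\varphi$ under $\LFP$ replaces the NFA construction.
\item Your Boolean neighbour types with capped per-type counts correspond to $\dettc(e)$ together with Lemma~\ref{lem:each-of-to-conjunction}. The capping is sound: beyond the number of non-starred atoms, surplus edges of a type can all go to one unbounded sink.
\item Both arguments are glued by Proposition~\ref{prop:duality-lfp-gfp}. Your explicit schema-level complementation via fresh $\decl{s'}{\lnot s}$ is a detail the paper leaves implicit.
\end{itemize}

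\textbf{The gap.} It lies in the one step where you extend duality beyond $\SSL$. The rule $\widetilde{\geqn{n}{p}{\varphi}}=\leqn{n-1}{p}{\widetilde\varphi}$ is not a dual, so your claim that the single-stratum proof goes through verbatim is false. That proof needs $\semdelta{\widetilde\varphi}{\Omega\setminus\alpha}{\graph}$ to be the complement of $\semdelta{\varphi}{\alpha}{\graph}$. For $n=1$, however, your rule sends $\exists p.\varphi$ to ``no $p$-successor satisfies $\widetilde\varphi$'', i.e.\ $\forall p.\lnot\widetilde\varphi$ instead of $\forall p.\widetilde\varphi$, and this is anti-monotone. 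Concretely, take $\{\decl{s}{\geqn{1}{p}{s}}\}$ on $\{(a,p,a)\}$. Its $\LFP$ assignment is $\emptyset$. Your dual $\{\decl{s}{\leqn{0}{p}{s}}\}$ has operator $\emptyset\mapsto\{(s,a)\}\mapsto\emptyset$ and no correct assignment at all.

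\textbf{The fix.} The correct dual of $\geqn{n}{p}{\varphi}$ is ``all but at most $n-1$ $p$-successors satisfy $\widetilde\varphi$'', i.e.\ $\leqn{n-1}{p}{\lnot\widetilde\varphi}$. Correspondingly, $\leqn{n}{p}{\varphi}$ dualises to $\geqn{n+1}{p}{\lnot\widetilde\varphi}$, and likewise for $p^-$. With this rule, complementation commutes with the operator and monotonicity is preserved. Two things must then be added.
\begin{itemize}
\item \emph{Stratification.} The dual catalogue now applies $\lnot$ to same-stratum shape names, so it is not stratified in the sense of Definition~\ref{def:stratification}. You must read stratification of $\SSL^{+}$ semantically: the operator restricted to each stratum is monotone, which is all that Definition~\ref{def:fixpoint-stratified} and the Kleene argument use. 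You need this anyway, because your neighbour types $\bigwedge s\land\bigwedge\lnot s'$ on the ShEx side already negate same-stratum names.
\item \emph{Expressibility.} The new modality must be shown expressible in both target languages. In restricted SHACL it is literally $\leqn{k}{p}{\lnot\psi}$. In restricted ShEx it is
\[
\shexneigh{(p.\psi)^{*}\shexeach(p.\shexneigh{\top}\shexone\varepsilon)\shexeach\cdots\shexeach(p.\shexneigh{\top}\shexone\varepsilon)\shexeach(\shexneginv{\emptyset})^{*}\shexeach(\shexneg{\{p\}})^{*}}
\]
with $k$ optional slots, which is positive in $\psi$.
\end{itemize}

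\textbf{A minor slip.} $\closed$ arises when $(\shexneg{Q})^{*}$ is \emph{absent}. When it is present, it yields $\leqn{0}{p}{\top}$ for each $p\in Q$ not mentioned in $f$, and an extra unbounded sink for each mentioned $p\notin Q$.
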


The proof proceeds by normalizing both formalisms to even simpler fragments and applying the duality principle
\ifArxivVersion
(see Appendix~\ref{app:equivalence-shex-shacl}).
\else 
(see \cite{arxiv}).
\fi
\subsection{Complexity}
\label{ssec:complexity}
\label{sec:complexity}

\begin{table}[t]
\setlength{\tabcolsep}{2pt}

\begin{tabular}{lllll}
\toprule
& \multicolumn{2}{l}{\textbf{Data complexity}} 
& \multicolumn{2}{l}{\textbf{Combined complexity}} 
\\ 
 & $\LFP$/$\GFP$ & $\SMS$ & $\LFP$/$\GFP$ & $\SMS$\\
\midrule
$\SSL$ &  $\ptime$ $\dagger$ & $\nptime$  $\ddagger$ & $\ptime$ $\dagger$ & $\nptime$ $\ddagger$ \\ 
SHACL  & $\ptime$ $\dagger$ & $\nptime$ $\ddagger$ & $\ptime$ $\dagger$ &  $\nptime$ $\ddagger$\\ 
ShEx  & $\ptime$ (Th.\,\ref{thm:shex-datacompl})& $\nptime$ (Th.\,\ref{thm:shex-datacompl})& $\ptime^\nptime$ (Th.\,\ref{thm:shex-combined}) & $\nptime^\nptime$ (Th.\,\ref{thm:shex-combined})  \\
\bottomrule 
\end{tabular}
\quad $\dagger$\cite{ACORSS20} \\[0.5ex]
\quad $\ddagger$\cite{CRS18}
\caption{Complexity of validation}
\label{tab:complexity}
\end{table}

An overview of prior and new results on the complexity of validation is given in Table~\ref{tab:complexity}. 
Both the combined and data complexity have been known for SHACL under \LFP and \SMS \cite{ACORSS20,CRS18}.
For $\GFP$ the complexity is the same,  as it comes from iterating a polynomial number of times the same operator as for $\LFP$. 
The upper bounds cover $\SSL$, which is a fragment of SHACL. The $\nptime$ lower bounds carry over too, because the SHACL schemas used in the proofs are readily expressible in $\SSL$, but we found a much simpler proof (for a fixed schema), by a straightforward reduction from 3COLOURABILITY 
\ifArxivVersion
(see Appendix~\ref{app:3col}).
\else 
(see \cite{arxiv}).
\fi
For ShEx it has only been known that combined complexity for the fragment without Boolean connectives is $\nptime$-complete under $\GFP$~\cite{SBG15}.
We fill this gap by establishing tight bounds for full ShEx under all three semantics.

\begin{theorem}[Data complexity of ShEx]
\label{thm:shex-datacompl}
For a fixed ShEx schema $\schema$, the problem of deciding whether a given graph $\graph$ conforms to $\schema$ is in $\ptime$ for both $\LFP$ and $\GFP$, and $\nptime$-complete for $\SMS$.
\end{theorem}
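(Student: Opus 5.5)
The plan is to reduce everything to one subroutine: with the schema $\schema=(\cat,\sel)$ fixed, decide in polynomial time, given a node $v$ and an assignment $\alpha$, whether $v \in \semdelta{\varphi}{\alpha}{\graph}$ for each shape $\varphi$ occurring in $\cat_\sel$. Boolean connectives, $\test(c)$, $\test(\vtype)$ and shape names are immediate, since membership in $\sem{\vtype}$ is treated as a constant-time oracle for the fixed finitely many value types. The interesting case is $\shexneigh{e}$. Here we must check whether $\neigh^{\pm}_\graph(v)$ can be partitioned according to $e$. Once $\alpha$ is fixed, each triple $(v,p,w)$ in the neighbourhood can be labelled by the set of atoms $p.\varphi$ or $p^-.\varphi$ of $e$ that it satisfies. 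Because $e$ is fixed, there are only constantly many such labels, so the neighbourhood is summarized by a vector of label counts of length $O(1)$, with entries bounded by $|\graph|$.

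Next I would argue that the set of count vectors matched by a fixed triple expression is semilinear. Matching of $;$, $|$ and $^*$ is commutative, so it reduces to Parikh images of a regular-like expression over the label alphabet, with the extra choice of which atom each triple is assigned to. Membership of a given vector, whose entries are written in unary or binary with polynomial magnitude, in a fixed semilinear set is decidable in polynomial time. A fixed number of linear sets, each a fixed base plus a combination of a fixed number of periods, can be enumerated by trying all coefficient tuples bounded by $|\graph|$, which gives polynomially many tuples in a constant dimension. Alternatively, a simple dynamic programme over the triples, tracking reachable count vectors, also gives polynomial time. This makes the operator $\alpha \mapsto \alpha'$ computable in polynomial time.

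For $\LFP$ and $\GFP$ on a stratified catalogue, I would follow Definition~\ref{def:fixpoint-stratified}. The number of strata is constant, and within each stratum the Kleene iteration from $\emptyset$ (resp.\ from the full assignment) stabilizes after at most $|\dom{\cat_\sel}|\cdot|\nodes(\graph)\cup\const(\cat_\sel)|$ steps, which is linear in $|\graph|$. Each step is polynomial, and the final check that $\alpha(t_{s,\sigma})=\emptyset$ is trivial. (Alternatively, $\GFP$ reduces to $\LFP$ via Proposition~\ref{prop:duality-lfp-gfp}, but ShEx is not closed under duals, so I would use the direct iteration.)

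For $\SMS$, the upper bound follows by guessing $\alpha$, which has polynomial size, and verifying correctness and conformance with the polynomial operator above. Hardness follows from the $\nptime$-hardness for \SSL with a fixed schema via the 3COLOURABILITY reduction. To transfer it, I would translate the fixed \SSL schema into ShEx. The translation is $\exists p.\varphi \mapsto \shexneigh{p.\varphi \shexeach (p.\top)^* \shexeach (\shexneginv{\emptyset})^{*}\shexeach(\shexneg{\{p\}})^*}$, with $\top$ written as, e.g., $\test(c)\lor\lnot\test(c)$. The case $\forall p.\varphi$ becomes $\lnot\exists p.\lnot\varphi$, and shape names are auxiliary. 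This translation preserves correct assignments, so $\SMS$-conformance coincides. The main obstacle I expect is the neighbourhood membership step: the semantics of $^*$ over arbitrary nested expressions must be shown to have a fixed-size, polynomially checkable description, and this requires care with triples that satisfy several atoms simultaneously.
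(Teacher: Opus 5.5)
Your proof is correct. Its skeleton matches the paper's: polynomially many applications of the operator for $\LFP$/$\GFP$, guess-and-check for $\SMS$, and hardness inherited from the fixed 3COLOURABILITY schema for $\SSL$. Like the paper, you also compress $\neigh^{\pm}_\graph(v)$ into a constant-dimensional vector of edge-type counts. Your labels correspond to the paper's types $(q,T)$, but they must also record direction and whether the predicate is admitted by $(\shexneg{Q})^{*}$ or $(\shexneginv{R})^{*}$, as the paper's $\mathsf{other}$ and $\mathsf{other}^-$ types do.

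The genuine difference is in the key lemma, Proposition~\ref{prop:neighmatch}. The paper runs Algorithm~\ref{alg:triple-ex-test}, which nondeterministically splits the multiset along the syntax of $e$. It reads this as an alternating logspace procedure and concludes via $\alogspace=\ptime$, implicitly relying on least-fixpoint acceptance to handle the non-terminating runs caused by $^*$. You instead argue that the set of count vectors matched by the fixed $e$ is semilinear: it is the Parikh image over atoms, projected through the choice of atom for each triple. You then test membership by enumerating the constantly many coefficients up to $|\graph|$. This explains structurally why only counts matter and avoids alternation and the non-termination issue. The cost is that it leans on Parikh's theorem and on closure of semilinear sets under projection, with potentially huge hidden constants.

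Your dynamic-programming alternative, over pairs of an automaton state for $e$ and a residual count vector, is the most elementary of the three routes. It is exactly the deterministic, count-collapsed version of the paper's own $\nptime$ procedure from the combined-complexity proof (guess an ordering of the neighbourhood and run the automaton). It runs in time $|\graph|^{O(|\Sigma_e|)}$, the same bound the paper's logspace argument yields.

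Your explicit ShEx encoding of $\exists p.\varphi$ is more careful than the paper's one-line remark that $\SSL$ is subsumed by ShEx. The only nit is that writing $\top$ as $\test(c)\lor\lnot\test(c)$ may add $c$ to $\const(\cat)$. That is harmless here, since $c$ is an isolated constant not selected in the 3COLOURABILITY schema. It is avoided altogether by writing $\top$ as $\shexneigh{\varepsilon\shexeach(\shexneginv{\emptyset})^{*}\shexeach(\shexneg{\emptyset})^{*}}$.
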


The $\nptime$ lower bound for $\SMS$ in Theorem~\ref{thm:shex-datacompl} carries over from $\SSL$ which is subsumed by ShEx. Let us move to the upper bounds.    

The validation problem for all three semantics amounts to applying the operator over shape assignments that is associated with the shape catalogue: for $\SMS$ we apply the operator to the guessed shape assignment to check if it is a fixpoint, and for $\LFP$ and $\GFP$ we iterate the operator over a suitable initial shape assignment until a fixpoint is reached, which is guaranteed to happen in a polynomial number of iterations.  
Applying the operator to a shape assignment $\alpha$ amounts to evaluating shape expressions in the catalogue, which is straightforward (following their syntactic structure) except for one subtask: checking whether the neighbourhood of a node in the graph matches a triple expression. 
Thus, in order to prove the upper bounds from Theorem~\ref{thm:shex-datacompl}, it suffices to show that this subtask is tractable. We do it in Proposition~\ref{prop:neighmatch}.   
%
%
By adding auxiliary declarations to the catalogue, we can assume that triple expressions are \emph{shallow}, in the sense that  the shape expression $\varphi$ in atomic triple expressions $p.\varphi$ and $p^-.\varphi$ must be a shape name. 

\begin{algorithm}[tb]
\small
\SetInd{0.5em}{0.5em}
\caption{Matching Triple Expressions}
\label{alg:triple-ex-test}
\KwIn{
graph $\graph$, shape assignment $\alpha$, node $v$, shallow triple expression $e=f_0 \shexeach (\shexneginv{R})^{*} $  or $ e= f_0 \shexeach (\shexneginv{R})^{*} \shexeach (\shexneg{Q})^{*}$
}
\KwOut{\textit{true} iff $\neigh^\pm_\graph(v) \in \semdelta{e}{\alpha}{\graph}$}

$M_0:=\mathcal{M}^{\alpha,e}_{\graph,v}$ \tcp*{\scriptsize extract multiset from $\neigh^\pm_\graph(v)$}

\SetKwFunction{FRecBS}{RecMatch}
\Return \FRecBS{$M_0,e$}

\SetKwProg{Fn}{Function}{:}{}
\Fn{\FRecBS{$M,f$}}{
  \If{$f=f_1\shexeach f_2$}{
        Nondeterministically split $M$ into  $M_1$ and $M_2$ \;
        \Return \FRecBS{$M_1,f_1$} $\&\&$ \FRecBS{$M_2,f_2$}\!\!\!\!\!\!\!
    }
  \If{$f=u^*$}{
        Nondeterministically split $M$ into  $M_1$ and $M_2$ \;
       \Return \FRecBS{$M_1,u$} $\&\&$ \FRecBS{$M_2,f$}\!\!
    }
  \If{$f=f_1\shexone f_2$}{
        \Return \FRecBS{$M,f_1$} $||$ \FRecBS{$M,f_2$} 
    }
  \lIf{$f=q.s$ and $M$ is the singleton of $(q,T)$ with $s\in T$}{
        \Return \textit{true}       
    }
  \lIf{$f=\lnot Q$ and $M$ is the singleton of $(p,T)$ with $p\not\in Q$}{
        \Return \textit{true}       
    }
  \lIf{$f=\lnot R^-$ and $M$ is the singleton of $(p^-,T)$ with $p\not\in R$}{
        \Return \textit{true}       
    }
  \lIf{$f=\varepsilon$ and $M$ is empty}{
        \Return \textit{true}
    }
  \Return \textit{false}    
}
\end{algorithm}

\begin{proposition}
\label{prop:neighmatch}
  Fix a catalogue $\cat$ and a shallow triple expression $e$ as in  Def.~\ref{def:shex-syntax-semantics} occurring in $\cat$. Given as input a graph $\graph$, a shape assignment $\alpha$ for $\graph$ and $\cat$, and a node $v$, we can decide in polynomial time whether  $\neigh^\pm_\graph(v)  \in \semdelta{e}{\alpha}{\graph}$. 
\end{proposition}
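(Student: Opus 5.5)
The plan is to exploit that $\cat$, and hence $e$, is fixed, so everything depending only on $e$ is a constant. First I will abstract the neighbourhood $\neigh^\pm_\graph(v)$ into a multiset, the object $\mathcal{M}^{\alpha,e}_{\graph,v}$ of Algorithm~\ref{alg:triple-ex-test}. Each edge $(v,q,w)$ or inverse $(v,q^-,w)$ becomes a pair $(q,T)$ with $T=\{s \mid w\in\alpha(s),\ s \text{ occurs in } e\}$. Because $e$ is shallow, whether an atom $q.s$ matches an edge depends only on this pair. So whether a set of edges matches a subexpression depends only on the multiset of pairs it induces. Moreover, all predicates $q$ not mentioned in $e$, and outside $Q$ and $R$, behave identically, so I can replace each of them by one representative $\star$ or $\star^-$. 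The pairs then range over a \emph{fixed} finite type set $\Theta$ of size at most $2(|\preds(e)|+1)\cdot 2^{|\shapes{e}|}$. Hence the neighbourhood is represented by a vector $\vec m\in\mathbb{N}^{\Theta}$ with entries bounded by $|\graph|$, and it can be computed in polynomial time.

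Next I will prove, by induction on subexpressions $f$ of $e$, that the set $L(f)=\{\vec m \mid \text{some multiset with counts } \vec m \text{ matches } f\}$ is decidable in time polynomial in $\|\vec m\|$. I will use dynamic programming over all vectors $\vec m'\le\vec m$; since $|\Theta|$ is constant, there are at most $(|\graph|+1)^{|\Theta|}$ of them, which is polynomial. The cases are as follows.
\begin{itemize}
\item An atom $q.s$ or $\lnot Q$ matches exactly the unit vectors of suitable types, and $\varepsilon$ matches only $\vec 0$.
\item For $f_1 \shexone f_2$, I take $L(f_1)\cup L(f_2)$.
\item For $f_1\shexeach f_2$, I take the Minkowski sum: $\vec m'\in L(f)$ iff some $\vec m_1+\vec m_2=\vec m'$ satisfies $\vec m_i\in L(f_i)$. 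Tabulating this over all $\vec m'\le\vec m$ costs polynomially many pairs.
\item For $f^*$, I compute $L(f^*)$ for all $\vec m'\le\vec m$ in increasing order: $\vec 0\in L(f^*)$, and $\vec m'\in L(f^*)$ iff $\vec m'=\vec a+\vec b$ with $\vec a\neq\vec 0$, $\vec a\in L(f)$ and $\vec b\in L(f^*)$. This is exactly the nondeterministic split of Algorithm~\ref{alg:triple-ex-test}, made deterministic by memoization.
\item The top-level $e$ is a $\shexeach$ of $f_0$ with $(\lnot R^-)^*$ and possibly $(\lnot Q)^*$, and is handled the same way.
\end{itemize}
The number of subexpressions is constant, so the whole table is polynomial.

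The main obstacle is justifying the abstraction in the first step. Matching is defined on \emph{sets} of concrete edges, and I must check that membership depends only on the type-count vector. This follows by induction: disjoint unions of concrete edge sets correspond exactly to sums of vectors, and any split of the vector lifts to a split of the concrete edges by assigning edges of the same type arbitrarily. The second subtle point is that the number of distinct types must not depend on $\graph$. The fresh predicates must be collapsed, and this is sound because the atoms $\lnot Q$ and $\lnot R^-$ only test $p\notin Q$ and $p\notin R$, and no atom mentions these predicates. With these two facts, correctness of Algorithm~\ref{alg:triple-ex-test} (read via memoization) and the polynomial bound follow.
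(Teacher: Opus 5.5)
Your proof is correct and shares its core with the paper's proof. Both abstract $\neigh^\pm_\graph(v)$ into a multiset over a fixed set of edge types $(q,T)$, with all predicates not occurring in $e$ collapsed into one representative per direction. Both then decide matching by recursively splitting that multiset along the structure of $e$. The difference is how the splitting procedure is shown to run in polynomial time. The paper keeps Algorithm~\ref{alg:triple-ex-test} nondeterministic, and possibly non-terminating because of $*$. It views $\mathsf{RecMatch}$ as an alternating procedure whose configurations $(M,f)$ fit in logarithmic space, since $M$ is just $|\Sigma_e|$ counters, and concludes via $\alogspace=\ptime$. You instead determinize the procedure directly: for each of the constantly many subexpressions, you tabulate which of the at most $(|\graph|+1)^{|\Theta|}$ sub-vectors of $\vec m$ it matches.

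Your route is self-contained and makes explicit two points the paper treats as routine. First, you justify the soundness of the type-vector abstraction, by lifting vector splits to splits of concrete edge sets. Second, you secure termination for $f^*$ by requiring the peeled-off summand $\vec a$ to be non-zero. The paper's route is shorter and leaves termination to the least-fixpoint acceptance condition of alternating machines. Your table is essentially the standard deterministic simulation behind $\alogspace\subseteq\ptime$.

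One small slip: your collapse rule does not collapse predicates in $Q\cup R$ that occur in no atom $q.s$, so they need their own types (the paper's $\Pi_e$ includes them). Your bound on $|\Theta|$ should therefore count them, not only $|\preds(e)|$. This is harmless, since $|\Theta|$ still depends only on $e$.
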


\begin{proof}
In this proof, $q$ stands for a predicate $p$ or an inverse predicate $\shexinverse p$. Let $\Pi_e$ be the set of predicates $p$ used in $e$ (including those in subexpressions $\lnot R^-$ and $\lnot Q$) and their inverses $p^-$, and let $\mathsf{other}$ be a fresh predicate not used in $e$.
Let $\Gamma_{e,q}$ be the set of shape names $s$ s.t. $q.s$ occurs in $e$.

An \emph{edge type} for $e$  is a pair $(q,T)$ where $q\in \Pi_e \cup \{\mathsf{other}, \mathsf{other}^-\}$ and $T\subseteq \Gamma_{e,q}$.  
Let $\Sigma_{e}$ denote the set of all edge types for $e$. In particular, 
$(\mathsf{other},\emptyset), (\mathsf{other}^-,\emptyset) \in \Sigma_{e}$. Note that $\Sigma_{e}$ might be exponential in the size of $e$, but this is fine as we consider $e$ to be fixed. 

A triple $(v,q,v')$ in $\neigh^\pm_\graph(v)$ 
with $q\in \Pi_e$ has type $(q,T)$ under shape assignment $\alpha$ if  
$v'\in \alpha(s)$ for each $s \in T$ and $v'\not\in \alpha(s)$ for each $s \in \Gamma_{e,q}\setminus T$.  A triple $(v,q,v')$ in $\neigh^\pm_\graph(v)$ 
with $q\not \in \Pi_e$ has type $(\mathsf{other}, \emptyset)$ if $q$ is a predicate, and $(\mathsf{other}^-, \emptyset)$ if $q$ is an inverse predicate. Note that each triple in $\neigh^\pm_\graph(v)$ has exactly one type.

We define $\mathcal{M}^{\alpha,e}_{\graph,v}$ as the multiset of types of triples in $\neigh^\pm_\graph(v)$. That is, $\mathcal{M}^{\alpha,e}_{\graph,v}$ is a multiset over  $\Sigma_e$ and the multiplicity of $(q,T)$ in   $\mathcal{M}^{\alpha,e}_{\graph,v}$ is equal to the number of triples $(v,q,v')$ in $\neigh^\pm_\graph(v)$ that have type $(q,T)$. 

Let $M$ be a multiset over $\Sigma_e$. For $(q,T)\in\Sigma_e$ we write $M(q,T)$ for the multiplicity of $(q,T)$ in $M$.
We call $M$  \emph{the singleton of $(q,T)$} if $M(q,T)=1$ and $M(q',T')=0$ for all $(q',T') \in \Sigma_e \setminus \{(q,T)\}$. 
We say $M$ is \emph{split} into multisets $M_1$ and $M_2$ over $\Sigma_e$ if $M_1(q,T)+M_2(q,T)=M(q,T)$ for every $(q,T)\in \Sigma_e$.

Consider the nondeterministic Algorithm~\ref{alg:triple-ex-test}, which recursively splits
$\mathcal{M}^{\alpha,e}_{\graph,v}$ while attempting to match subexpressions of $e$. 
It is easy to see that $\neigh^\pm_\graph(v) \in \semdelta{e}{\alpha}{\graph}$ iff 
\begin{equation}
\label{termination} 
\text{some run of Algorithm~\ref{alg:triple-ex-test} on  
$\graph,\alpha,v,e$ returns  \textit{true}.} \tag{$\dagger$}
\end{equation}
Due to the possible presence of $*$ in $e$, Algorithm~\ref{alg:triple-ex-test} need not terminate. Yet, condition \eqref{termination} can be effectively checked. 
In fact, it can be checked in polynomial time when $e$ is assumed to be fixed and only the size of $\graph,\alpha,v$ may grow, because the recursive subroutine $\mathsf{RecMatch}$ can be seen as an \emph{alternating} procedure that requires only logarithmic space. 
Indeed, storing the multisets $M,M_1,M_2$ used in the $\mathsf{RecMatch}$ requires only logarithmic space: for each multiset we only need  $|\Sigma_e|$ counters that count up to the number of nodes 
in $\graph$. Hence, the problem of checking \eqref{termination} is in $\alogspace$ and hence in $\ptime$ since  $\alogspace=\ptime$.
  \end{proof}

For the combined complexity the upper bounds are straightforward, and the lower bounds use reductions from suitable variants of the $\mathsf{SAT}$ problem 
\ifArxivVersion 
(see Appendix~\ref{app:shex-combined}).
\else 
(see \cite{arxiv}).
\fi

\begin{theorem}[Combined complexity of ShEx]
\label{thm:shex-combined}
The problem of deciding whether a given graph $\graph$ conforms to a given ShEx schema is
$\ptime^\nptime$-complete for $\LFP$ and $\GFP$, and 
$\nptime^\nptime$-complete for $\SMS$. 
\end{theorem}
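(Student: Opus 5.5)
The plan has two halves: upper bounds from the operator view of the semantics, and lower bounds from SAT-style reductions built with recursive shapes.

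\textbf{Upper bounds.} The starting point is the operator view used for Theorem~\ref{thm:shex-datacompl}. Applying the operator $\mapsto$ to a given assignment $\alpha$ means evaluating every shape of $\cat$ at every node of $\graph$. The only nontrivial subtask is deciding whether $\neigh^\pm_\graph(v)\in\semdelta{e}{\alpha}{\graph}$. With $e$ part of the input this problem lies in $\nptime$: we guess a witness, namely a decomposition of the neighbourhood edges according to the parse of $e$. For each $f^*$ the witness records the number of iterations, which is bounded by the number of edges, so the witness has polynomial size and can be checked in polynomial time. Hence one application of the operator is polynomial time with an $\nptime$ oracle; we ask the oracle one query per node and per triple-expression occurrence.

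For $\LFP$ and $\GFP$ we process the strata one at a time and iterate the operator from $\emptyset$ or from $\Omega$. Each stratum needs at most $|\Omega|$ iterations, which is polynomial, so the whole computation is in $\ptime^\nptime$. For $\SMS$ (brave) we guess $\alpha$ for $\cat_\sel$. We then check, with the $\nptime$ oracle, that $\alpha$ is a fixpoint and that every $t_{s,\sigma}$ is empty, which gives $\nptime^\nptime$. Cautious $\SMS$ is the complement of the same guess-and-check problem and so lands in $\mathrm{co}\nptime^\nptime$; the statement concerns brave $\SMS$.

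\textbf{Lower bounds.} First we show that matching a neighbourhood against a triple expression is $\nptime$-hard even without recursion. The reduction is from SAT, or from exact cover. The graph is a node $v$ with one $p$-edge per clause or element, leading to nodes labelled by $\test(c)$. The expression is a sequence $f_1\shexeach\dots\shexeach f_n$, one component per variable. Each $f_i$ is a disjunction $f_i^{\top}\shexone f_i^{\bot}$, where each branch consumes exactly the edges of the clauses that the corresponding literal would cover. Unbounded multiplicities are handled with ${}^*$, since full ShEx allows ${}^*$ over $\shexeach$.

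Next we use recursion to obtain the oracle structures.
\begin{itemize}
\item For $\ptime^\nptime$-hardness we reduce from the lexicographically-maximum satisfying assignment problem (is the last bit $1$?), or equivalently from evaluating a polynomial chain of dependent NP-oracle queries. We build a chain of nodes $v_1,\dots,v_m$. The shape $s_i$ at $v_i$ encodes the answer to the $i$-th oracle query, whose formula depends on the earlier answers, which are accessed through shape names referring back along $p^-$-edges. Negation is used stratifiedly, so the construction works for both $\LFP$ and $\GFP$.
\item For $\nptime^\nptime$-hardness we reduce from $\exists\forall$-SAT, which is $\Sigma_2^P$-complete. The existential variables are guessed by brave $\SMS$ through non-stratified pairs $x: \lnot \bar x$, $\bar x: \lnot x$. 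The universal part is a co-NP check: the negation of a triple-expression match, conditioned on the guessed shapes of the neighbouring literal nodes.
\end{itemize}

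\textbf{Main obstacle.} The hardest step is the $\ptime^\nptime$ lower bound. The triple expression of query $i$ must depend on earlier answers, and only the shapes of neighbouring nodes can carry that information. I would try to encode each earlier answer as a shape on a gadget node that the matched edges point to. Shallow atoms $q.s$ then test those answers, and the query remains an NP matching problem over the neighbourhood of $v_i$.
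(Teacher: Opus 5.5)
Your plan is essentially the paper's proof: the same upper bounds (iterating the operator stratum by stratum, resp.\ guessing a correct assignment, with an $\nptime$ oracle for neighbourhood matching), the same source problems ($\exists\forall$-SAT and the last bit of the lexicographically maximal satisfying valuation), and the same key gadget of storing earlier answers as shapes on the $x_j$-successors of the tested node, read by the triple expression through $x_j.s_i$ versus $x_j.\lnot s_i$ (the paper uses a single star centre $a$ rather than a chain). Two small remarks: the paper guesses the existential valuation via the positive recursion $s\colon\shexneigh{p.s\shexeach\top}$ on $p$-self-loops, which keeps the catalogue stratified, whereas your pair $x\colon\lnot\bar x$, $\bar x\colon\lnot x$ is non-stratified and hence outside ShEx as standardised; and the $\ptime^\nptime$ reduction is valid for both $\LFP$ and $\GFP$ because its catalogue is non-recursive (so the two semantics coincide), not merely because negation is stratified.
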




\section{Conclusion}
\label{sec:conclusion}

The behaviour of SHACL validators on recursive schemas is unpredictable, which is only natural since its language design does not settle how to deal with recursion. Indeed, while ShEx systems consistently behave in line with \GFP, SHACL validators mostly appear to be aligned with brave \SMS, but some fail on simple schemas and others appear to mix behaviours depending on structural properties of the schema. This heterogeneity explains long-standing interoperability glitches and motivates standardizing an explicit, implementation-independent recursion contract. 

Our results show that \LFP, the hot candidate for recursion in SHACL, is a sensible choice:
\begin{enumerate}
\item It is natural. It is the semantics of common rule languages such as Datalog, the language Rel that is used in industry applications~\cite{Aref25}, as well as RDF-specific ones like SPIN~\cite{KHI11} and RIF~\cite{rif}. It is also the standard semantics for stratified programs in logic programming.
\item It allows effective translations between large fragments of ShEx and SHACL, which improves interoperability and compatibility between the two languages.
\item It avoids the NP-hardness of validation under \SMS.
\end{enumerate}

\section*{Acknowledgements}

This work was initiated during Dagstuhl Seminar 24102: \emph{Shapes in Graph
Data: Theory and Implementation}. It was funded by 
SHAKIGRA: Shaping Knowledge and Interoperable Graphs, code: PID2024-157010OB-I00 from the Spanish Research Agency, project SEK-25-GRU-GIC-24-089 from Asturias Regional research Plan and COST Action CA23147 GOBLIN (Labra Gayo);
Poland's NCN grant 2018/30/E/ST6/00042 (Murlak), the Austrian Science Fund (FWF)  projects 10.55776/COE12 and P30873 (\v{S}imkus);
and partially supported by the Province of Bolzano and FWF through project OnTeGra (DOI 10.55776/PIN8884924, Savkovi\'c).

\section*{AI Declaration}

The authors have not employed any Generative AI tools.

\bibliographystyle{kr}
\bibliography{bib/references-camera-ready}

 \appendix

 \section{Proof of Theorem~\ref{thm:equivalent-shex-shacl}}
\label{app:equivalence-shex-shacl}

The proof follows these steps. 
In Section~\ref{sec:appendix-equivalence-microshex-microshacl} we define two smaller fragments \microshex{} of ShEx and \microshacl{} of SHACL and show that they are equally expressive for the \LFP semantics for \microshacl.
Then, in Section~\ref{sec:appendix-normalise-shex} we show that restricted ShEx can be normalised to \microshex{}, and in Section~\ref{sec:appendix-normalise-shacl} that restricted SHACL can be normalised to \microshacl{}.

We say that $x_1$ and $x_2$ are equivalent, written $x_1 \equiv x_2$, if for all graphs $\graph$, catalogues $\cat$, and assignments $\alpha \subseteq \dom{\cat} \times (\nodes(\graph) \cup \const(\cat))$, it holds $\semdelta{x_1}{\alphaRel}{\graph} = \semdelta{x_2}{\alphaRel}{\graph}$.
Above, each of $x_1, x_2$ can be a ShEx or a SHACL shape, or a ShEx triple expression.

\subsection{\microshex{} and \microshacl}
\label{sec:appendix-equivalence-microshex-microshacl}

\begin{definition}[\microshex] 
\label{def:syntax-microshex}
Let $\top = \varepsilon \shexeach (\shexneginv{\emptyset})^{*} \shexeach (\shexneg{\emptyset})^{*}$. 
\microshex{} is given by the grammar
    \begin{align*}
    \varphi  \gDef & \test(c) 
    \gMid \ \test(\vtype) 
    \gMid \shexref s
    \gMid \lnot \varphi
    \gMid \varphi \land \varphi
    \gMid \varphi \lor \varphi
    \gMid \shexneigh{g}
    \gMid \shexneigh{h} 
    \gEnd \\
    g \gDef & \varepsilon \shexeach \top \ 
    \gMid p.\varphi^{n} \;\shexeach \top \ 
    \gMid \shexinverse p.\varphi^{n} \;\shexeach \top
    \gEnd \\
    h \gDef & \varepsilon \shexeach (\shexneginv{\emptyset})^{*} \ \gMid \ p. \shexneigh{\varepsilon \shexeach \top}^{*} \shexeach h
    \gEnd 
\end{align*}
where $p.\varphi^{n}$ is a syntactic shortcut for $p.\varphi\shexeach\cdots\shexeach p.\varphi$ repeated $n$ times.
\end{definition}
\microshex{} allows two kinds of limited triple expressions.
Expressions $g$ are always completely open (i.e. $(\shexneginv{R})^{*} \shexeach (\shexneg{Q})^{*}$ from Def.~\ref{def:shex-syntax-semantics} is always $\top$, obtained for $R = Q = \emptyset$) and are otherwise limited to $p.\varphi^{n}$ or $\shexinverse p.\varphi^{n}$.
Expressions $h$ are necessary to capture closed ShEx expressions and the $\closed$ construct of SHACL. 
Note that in Def.~\ref{def:syntax-microshex}, the occurrences of $\varepsilon$ are necessary to ensure that \microshex{} is a syntactic fragment of ShEx. We will omit them in the sequel, as $\varepsilon$ is a neutral element for the `$\shexeach$' operator.

\begin{definition}[\microshacl]
\label{def:syntax-microshacl}
\microshacl{} is given by the grammar
\begin{align*}
\varphi
  \gDef &
  \top
  \gMid \bot
  \gMid \hasvalue(c)
  \gMid \test(\vtype)
  \gMid s
  \gMid \closed(Q)
  \gMid \neg \varphi
  \gMid \varphi \land \varphi \gMid \\
  & \varphi \lor \varphi
  \geqn{k}{p}{\varphi}
  \gMid \geqn{k}{p^{-}}\varphi
  \gMid \leqn{k}{p}{\varphi}
  \gMid \leqn{k}{p^{-}}\varphi
  \gEnd
\end{align*}
\end{definition}
\microshacl{} forbids the $\disj$ and $\eq$ constructs, as well as paths different from a single predicate. 

\begin{proposition}
\label{prop:equivalence-microshex-microshacl}
(a) For every \microshex{} schema $\schema = (\cat, \sel)$, there exists a \microshacl{} schema $\schema' = (\cat', \sel)$ s.t. a graph $\graph$ conforms to $\schema$ iff $\graph$ \GFP-conforms to $S'$. 
Conversely, (b) for every \microshacl{} schema $\schema = (\cat, \sel)$, there exists a \microshex{} schema $\schema' = (\cat', \sel)$ s.t. a graph $\graph$ \GFP-conforms to $\schema$ iff $\graph$ conforms to $S'$.
\end{proposition}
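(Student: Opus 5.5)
The plan is to give translations that are \emph{equivalent at the level of individual shapes}, in the sense of $\equiv$ defined above. That is, for every graph $\graph$, catalogue and assignment $\alpha$, the translated shape has the same semantics $\semdelta{\cdot}{\alpha}{\graph}$ as the original. Both translations keep the same shape names, the same selector map $\sel$ and the same constants, so the operator $\alpha \mapsto \{(s,v) \mid v \in \semdelta{\cat(s)}{\alpha}{\graph}\}$ of $\cat_\sel$ coincides with that of $\cat'_\sel$. Consequently the correct assignments of the two catalogues coincide, and so do the stratum-wise greatest fixpoints of Def.~\ref{def:fixpoint-stratified}, provided a stratification of one catalogue is also a stratification of the other. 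Conformance is then identical, since it only depends on $\alpha(t_{s,\sigma})$. Note that the proposition uses \GFP{} on both sides; the duality of Proposition~\ref{prop:duality-lfp-gfp} is only applied afterwards, when deriving Theorem~\ref{thm:equivalent-shex-shacl}.

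\textbf{Direction (a).} I translate \microshex{} to \microshacl{} by induction on shapes. Atoms $\test(c)$, $\test(\vtype)$, $s$ and the Boolean connectives are kept as they are. The key equivalences are the following.
\begin{itemize}
\item $\shexneigh{p.\varphi^{n} \shexeach \top} \equiv \geqn{n}{p}{\varphi}$. The $n$ matched triples are distinct and all have subject $v$ and predicate $p$, so they have $n$ distinct objects in $\varphi$; the remaining triples are absorbed by $\top$.
\item Symmetrically, $\shexneigh{\shexinverse p.\varphi^{n} \shexeach \top} \equiv \geqn{n}{p^-}{\varphi}$.
\item $\shexneigh{\varepsilon\shexeach\top} \equiv \top$.
\item An expression $h$ unfolds to $p_1.\shexneigh{\top}^* \shexeach \cdots \shexeach p_k.\shexneigh{\top}^* \shexeach (\shexneginv{\emptyset})^*$. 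It allows arbitrary incoming edges and only outgoing edges with predicates in $Q=\{p_1,\dots,p_k\}$, so $\shexneigh{h} \equiv \closed(Q)$.
\end{itemize}
Each equivalence is verified directly from the semantics in Def.~\ref{def:shex-syntax-semantics} and Def.~\ref{def:shacl-syntax-semantics}. For constants outside $\nodes(\graph)$ both sides agree trivially, because such nodes have an empty neighbourhood.

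\textbf{Direction (b).} I translate \microshacl{} to \microshex{} by the reverse equivalences.
\begin{itemize}
\item $\top \mapsto \shexneigh{\varepsilon\shexeach\top}$ and $\bot \mapsto \lnot\shexneigh{\varepsilon\shexeach\top}$.
\item $\geqn{k}{p}{\varphi} \mapsto \shexneigh{p.\varphi^{k}\shexeach\top}$ (with $k=0$ giving $\top$), and $\geqn{k}{p^-}{\varphi} \mapsto \shexneigh{\shexinverse p.\varphi^{k}\shexeach\top}$.
\item $\leqn{k}{p}{\varphi} \mapsto \lnot\shexneigh{p.\varphi^{k+1}\shexeach\top}$, and similarly for $p^-$.
\item $\closed(Q) \mapsto \shexneigh{h_Q}$, with $h_Q$ listing the predicates of $Q$.
\end{itemize}

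The main obstacle I expect is preserving stratification rather than the semantic equivalences themselves. Direction (b) introduces explicit negations in front of $\leqn{k}{p}{\varphi}$, so a shape name $s$ occurring inside $\varphi$ now occurs under $\lnot$. My plan is to note that $\leqn{}{}{}$ is antitone in $\varphi$, so any meaningful stratification for SHACL must already treat such occurrences as negative. I will then show that a SHACL stratification transfers verbatim. Concretely, I will use auxiliary shape names and push negation down to $\lnot s$ and $\lnot\test(\cdot)$, as in the remark after Def.~\ref{def:shex-syntax-semantics}. For direction (a) the check is easy, because translated shapes only carry the negations that were already present. Once stratifications match, the identical operators give identical \GFP{}-conforming assignments stratum by stratum, which completes the proof.
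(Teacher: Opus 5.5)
Your proposal is correct and follows the paper's proof. Both translate shape by shape using the same equivalences ($\shexneigh{p.\varphi^{n}\shexeach\top}\equiv\geqn{n}{p}{\varphi}$ and its inverse-predicate variant, $\shexneigh{h}\equiv\closed(Q)$, $\shexneigh{\varepsilon\shexeach\top}\equiv\top$), keep the shape names and selector map, and conclude that the associated operators, and hence the \GFP{}-conforming assignments, coincide. Your explicit handling of $\bot$, of $\leqn{k}{p}{\varphi}$ as $\lnot\shexneigh{p.\varphi^{k+1}\shexeach\top}$, and of stratification preservation fills in details the paper leaves implicit; one small correction is that once you introduce auxiliary shape names, they need new intermediate strata, so the stratification is refined rather than copied verbatim.
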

\begin{proof}
We show how \microshex{} shapes can be translated to equivalent \microshacl{} shapes, and conversely. 
Then, for the (a) direction of the proposition, we define $\cat' = \{\decl{s}{\varphi'} \mid s \in \dom{\cat} \text{ and } \varphi' \equiv \varphi\}$ where $\varphi$ is a ShEx shape and $\varphi'$ is a SHACL shape, and conversely for the (b) direction.

The following  equivalences hold between \microshex{} shapes (left-hand side) and \microshacl{} shapes (right-hand side), assuming that the \microshex{} shape $\varphi$ is equivalent to the \microshacl{} shape $\varphi'$:
\begin{gather*}
\test(c) \equiv \test(c)\,, \quad \test(\vtype) \equiv \test(\vtype)\,, 
\quad \shexref s \equiv s\,, \quad \shexneigh{\top} \equiv \top\,,\\
\shexneigh{p_1.\shexneigh{\top}^* \shexeach \cdots \shexeach p_n.\shexneigh{\top}^{*} \shexeach (\shexneginv{\emptyset})^{*}} \equiv \closed(\{p_1, \ldots, p_n\})\,,\\
\shexneighzero{p.\varphi^k} \equiv \geqn{k}{p}{\varphi'}\,,\quad
\shexneighzero{\shexinverse p.\varphi^k} \equiv \geqn{k}{\shexinverse p}{\varphi'}\,.
\end{gather*}

These equivalences directly lead to an effective translation from \microshex{} shapes to equivalent \microshacl{} shapes, and vice-versa.
\end{proof}

Using duality of least and greatest fixed points (Prop.~\ref{prop:duality-lfp-gfp}), and the fact that $\geqn{n}{p}{\varphi}$ is the dual of $\leqn{n}{p}{\varphi}$ thanks to the equivalence $\leqn{n}{\pi}{\varphi} \equiv \neg (\geqn{n+1}{\pi}{\varphi})$, it easily follows that
\begin{corollary}
\microshacl{} under \LFP semantics is equally expressive to \microshex.
\end{corollary}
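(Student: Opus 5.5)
Our plan is to combine Proposition~\ref{prop:equivalence-microshex-microshacl} with the duality of Proposition~\ref{prop:duality-lfp-gfp}. The statement concerns \microshex{}, which is evaluated under \GFP, and \microshacl{} evaluated under \LFP. Proposition~\ref{prop:equivalence-microshex-microshacl} already shows that \microshex{} and \microshacl{} \emph{under \GFP} are equally expressive, via shape-by-shape equivalences that keep the same shape names. It therefore suffices to show that \microshacl{} under \GFP and \microshacl{} under \LFP are equally expressive. We will do this by building, for every \microshacl{} schema, a dual \microshacl{} schema whose \LFP semantics is the complement of the original \GFP semantics, and conversely.

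First we normalise. By introducing auxiliary shape names, we bring every \microshacl{} catalogue into a form where negation occurs only as $\lnot\hasvalue(c)$, $\lnot\test(\vtype)$ and $\lnot s$. We also extend the duality map $\widetilde{\cdot}$ of Section~\ref{sec:classic-recursive} to the constructs that \SSL lacks:
\begin{itemize}
\item $\test(\vtype)$ and $\lnot\test(\vtype)$ are swapped;
\item $\closed(Q)$ and $\lnot\closed(Q)$ are swapped, where $\lnot\closed(Q)$ is treated as a frozen atom like $\lnot\hasvalue(c)$, since it does not depend on $\alpha$;
\item $\widetilde{\geqn{k}{p}{\varphi}} = \leqn{k-1}{p}{\widetilde\varphi}$ and $\widetilde{\leqn{k}{p}{\varphi}} = \geqn{k+1}{p}{\widetilde\varphi}$, and likewise for $p^-$.
\end{itemize}
The counting clauses use $\leqn{n}{\pi}{\varphi}\equiv\lnot\geqn{n+1}{\pi}{\varphi}$. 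For $k=0$ we set $\widetilde{\geqn{0}{p}{\varphi}}=\bot$, since $\geqn{0}{p}{\varphi}\equiv\top$.

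The key check is that the proof of Proposition~\ref{prop:duality-lfp-gfp} carries over to this extended map. The proof needs two facts: the operator of the normalised catalogue is monotone, and $\alpha\mapsto\alpha'$ holds iff $\Omega\setminus\alpha \mathbin{\tilde\mapsto} \Omega\setminus\alpha'$. Monotonicity holds because $\geqn{k}{p}{}$ and $\leqn{k}{p}{}$ are monotone and antitone in $\varphi$, respectively. Hence in a catalogue where $s$ occurs under $\leqn{k}{p}{}$, stratification has to treat $\leqn{k}{p}{\varphi}$ as containing $\varphi$ negatively. We will therefore require, as part of the normalisation, that $\leqn{}{}{}$ is applied only to shape names from lower strata, or equivalently rewrite $\leqn{k}{p}{\varphi}$ as $\lnot\geqn{k+1}{p}{\varphi}$ before stratifying. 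The complement identity is then verified shape by shape, by structural induction: $(\nodes(\graph)\cup\const(\cat))\setminus\semdelta{\varphi}{\alpha}{\graph}=\semdelta{\widetilde\varphi}{\Omega\setminus\alpha}{\graph}$. The counting case reduces to $\#\{u\in\sem{p}_v:u\in X\}\ge k$ iff $\#\{u\in\sem{p}_v:u\notin X\}\le |\sem{p}_v|-k$.

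That counting case is the main obstacle. Complementing the inner set does \emph{not} turn $\geqn{k}{}{}$ into $\leqn{k-1}{}{}$ of the complement; it relates it to the complement via the total degree. The fix we would try first is to keep the inner shape names themselves undualised: $\widetilde{s}=s$ denotes the dual name whose \GFP value is the complement of the \LFP value of $s$. So the inner argument of $\widetilde{\geqn{k}{p}{s}}$ should be $\lnot s$ read in the dual catalogue, and the correct dual is $\leqn{k-1}{p}{\lnot s}$ there. This forces counting quantifiers over recursive names to refer to the negation of the dual name, which is legitimate only across strata. Thus within a stratum we restrict to $\geqn{1}{}{}$ and $\leqn{0}{}{}$, that is, to $\exists$ and $\forall$ with $\leqn{0}{p}{\lnot\varphi}\equiv\forall p.\varphi$, and handle these exactly as in \SSL. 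Higher counts are pushed into lower strata by auxiliary names.

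Once the dual catalogue is in place, we handle selectors through $\cat_\sel$. Duality preserves emptiness of the complements appropriately: a fresh name $t_{s,\sigma}$ has empty \LFP value iff its dual has full \GFP value. So, as the final step, we take the dual selectors to be $\decl{t}{\lnot\widetilde{t_{s,\sigma}}}$, and conclude that a graph \LFP-conforms to the original \microshacl{} schema iff it \GFP-conforms to the dual schema. Chaining this with Proposition~\ref{prop:equivalence-microshex-microshacl} gives the corollary in both directions.
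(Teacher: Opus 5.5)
Your overall route is the paper's. Proposition~\ref{prop:equivalence-microshex-microshacl} handles \microshex{} versus \microshacl{} under \GFP, and the duality of Proposition~\ref{prop:duality-lfp-gfp}, extended to counting, handles \LFP versus \GFP within \microshacl{}. You also correctly notice what the paper's one-line justification glosses over: the dual of $\geqn{k}{p}{\varphi}$ is $\leqn{k-1}{p}{\lnot\widetilde\varphi}$, not $\leqn{k-1}{p}{\widetilde\varphi}$, and likewise $\widetilde{\leqn{k}{p}{\varphi}}=\geqn{k+1}{p}{\lnot\widetilde\varphi}$.

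The gap is in what you do next. Your claim that ``higher counts are pushed into lower strata by auxiliary names'' fails as soon as recursion passes through a count. In $\decl{s}{\test(a)\lor\geqn{2}{p}{s}}$, an auxiliary name for $\geqn{2}{p}{s}$ depends on $s$ and $s$ depends on it, so no stratification can put it below $s$. As written, your argument covers only catalogues in which current-stratum names occur under $\exists$ and $\forall$ alone. That is a proper fragment of \microshacl{}, and of \microshex{} as well, since $\shexneighzero{p.\varphi^{n}}$ allows recursion through counts $n\ge 2$. So neither direction of the corollary is established.

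The restriction is also unnecessary, and your convention contradicts your own treatment of $k=1$. You require $\exists^{\leq k}$ to apply only to lower-stratum names, yet you keep $\forall p.\psi=\leqn{0}{p}{\lnot\psi}$ inside a stratum. You must keep it, since it is the dual of $\exists$. But $\leqn{k-1}{p}{\lnot\psi}$ has exactly the same form, and it is monotone in $\psi$ for every $k$, because an antitone count is applied to a negation.

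You already observe that the scope of $\exists^{\leq k}$ is a negative context. The missing step is to let polarities compose: both $\lnot$ and the scope of $\exists^{\leq k}$ flip polarity, and names of the current stratum may occur only positively. With this convention:
\begin{itemize}
\item your corrected duals preserve the polarity of every occurrence, so $\widetilde\cat$ is stratified by the same strata;
\item each stratum's operator is monotone;
\item the identity $(\nodes(\graph)\cup\const(\cat))\setminus\semdelta{\varphi}{\alpha}{\graph}=\semdelta{\widetilde\varphi}{\Omega\setminus\alpha}{\graph}$ holds by structural induction for all $k$.
\end{itemize}
The proof of Proposition~\ref{prop:duality-lfp-gfp} then goes through verbatim, with no restriction on counts. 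This is the convention the paper's argument implicitly relies on. Via Proposition~\ref{prop:equivalence-microshex-microshacl}, $\leqn{k-1}{p}{\lnot\psi}$ becomes $\lnot\shexneighzero{p.(\lnot\psi)^{k}}$, with the recursive name under two negations, exactly as for $\forall$ when $k=1$.

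Finally, your selector step is ill-formed, because selectors may not mention shape names. Instead, for each selected $s$ add $\decl{\hat s}{\lnot s}$ in a new top stratum of $\widetilde\cat$ and use the selector $\decl{\hat s}{\sigma}$. The \GFP value of $\hat s$ in $\widetilde\cat$ is then the \LFP value of $s$ in $\cat$.
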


\subsection{Normalisation of Restricted ShEx}
\label{sec:appendix-normalise-shex}
%
Without loss of generality, we assume that all triple expressions are shallow. 
The set of \emph{triple constraints} of a (shallow) triple expression $e$, denoted $\ttc{e}$, is the set of $e$'s sub-expressions of the form $p.s$ or $\shexinverse p.s$ or $(\shexneg Q)$ or $(\shexneginv R)$.
We let $\preds(e) = \{p \mid p.s \in \ttc{e}\}$ be the set of predicates that appear in $e$.
For every $p \in \preds(e)$, and we let $\shapes{e, p} = \{s \mid p.s \in \ttc{e}\}$ be the shapes that appear in $e$ with predicate $p$.

\paragraph{Properties of triple expressions}
Following \cite{SBG15}, triple expressions can be seen as regular expressions with commutative concatenation over an alphabet of triple constraints, where `;' plays the role of concatenation. 
It is then easy to show that the following equivalences hold for all closed triple expressions $f_1, f_2, f_3$ and for $\mathit{op} = (\shexneginv{R})^{*} \shexeach (\shexneg{Q})^{*}$ or $\mathit{op} = (\shexneginv{R})^{*}$ for some $R, Q \subseteq \IRIs$:
\begin{gather*}
(f_1 \shexeach f_2) \shexone f_3 \equiv (f_1 \shexeach f_3) \shexone (f_2 \shexeach f_3)\,, \qquad (f_1 \shexone f_2)^{*} \equiv f_1^{*} \shexeach f_2^{*}\,,\\
\shexneigh{(f_1 \shexone f_2) \shexeach \mathit{op}} \equiv \shexneigh{f_1 \shexeach \mathit{op}} \lor \shexneigh{f_2 \shexeach \mathit{op}}.
\end{gather*}%
Additionally, every triple expression $f \shexeach (\shexneginv{R})^{*} \shexeach (\shexneg{Q})^{*}$ is equivalent to one in which $\preds(e) \subseteq R \cup Q$, and similarly for expressions without the $(\shexneg{Q})^{*}$ component, thanks to these equivalences:
\begin{align*}
f \shexeach (\shexneginv{R})^{*} \equiv\;& f \shexeach\, p^{-}.\{\top\}^{*} \shexeach\; (\shexneginv{(R\cup\{p\})})^{*}\\
f \shexeach (\shexneginv{R})^{*} \shexeach (\shexneg{Q})^{*} \equiv\;& f \shexeach\, p.\{\top\}^{*} \shexeach\; (\shexneginv{R})^{*} \shexeach (\shexneg{(Q \cup \{p\})})^{*}
\end{align*}

\subsubsection{Normalisation}
\label{sec:appendix-deterministic-triple-expr}
In the sequel we show how Restricted ShEx shapes can be translated to \microshex{} shapes.

A triple expression $e$ is called \emph{deterministic} if for all assignments $\alpha$, graphs $\graph$, catalogues $\cat$, and triples $(v, p, u) \in \graph$, $e$ contains at most one triple constraint $t$ s.t. $\{(v, p, u)\} \in \semdelta{t}{\alphaRel}{\graph}$, and contains at most one inverse triple constraint $t$ s.t. $\{(v, \shexinverse p, u)\} \in \semdelta{t}{\alphaRel}{\graph}$.
Every triple expression $e$ can be effectively transformed into an equivalent deterministic one $\dettc(e)$ as follows. 
%
Let $e = f \shexeach (\shexneginv{R})^{*} \shexeach (\shexneg{Q})^{*}$; the case where $e = f \shexeach (\shexneginv{R})^{*}$ is similar.
W.l.o.g. we assume that $\preds(e) \subseteq R \cup Q$.
\illustration{If $e = \left(p.s_1 \shexeach q.s_2 \shexone p.s_3\right) \,\shexeach\, (\shexneg{\{p,r\}})^{*}$. That is, $f = p.s_1 \shexeach q.s_2 \shexone p.s_3$. 
Then $\preds(e) = \{p, q\}$ and \\
$e \equiv \left(p.s_1 \shexeach q.s_2 \shexone p.s_3\right) \;\shexeach\; q.\shexneigh{\top}^{*} \;\shexeach\; (\shexneg{\{p, r, q\}})^{*}$.}
%
%
\illustration{
With $e$ as above, $\shapes{e,p} = \{s_1, s_3\}$ and $\shapes{e,q} = \{s_2\}$.}
For every $p \in \preds(e)$ and every $X \subseteq \shapes{e, p}$, define the shape $\Phi_{e,p,X}$: 
$$
\Phi_{e,p,X} = \bigwedge_{s \in X} s \land \bigwedge_{s \in \shapes{e,p} \setminus X} \neg s
$$
\illustration{With $e$ as above, there are four different shapes $\Phi_{e,p,X}$:\\
$\Phi_{e,p, \{s_1\}} = s_1 \land \neg s_3$\\
$\Phi_{e,p, \{s_3\}} = \neg s_1 \land s_3$\\
$\Phi_{e,p,\{s_1,s_3\}} = s_1 \land s_3$\\
$\Phi_{e,p, \emptyset} = \negs_1 \land \negs_3$}
Now, for every triple constraint $p.s$ of $e$, we define
$$
\dettc(e, p.s) = \mathop{|}_{X \subseteq \shapes{e,p} \mid s \in X} p.\Phi_{e,p,X}.
%
$$
\illustration{With $e$ as above,\\
$\dettc(e, p.s_1) = \Phi_{e,p, \{s_1 \land \negs_3\}} \shexone \Phi_{e,p, \{s_1, s_3\}}$\\
$\dettc(e, p.s_3) = \Phi_{e,p, \{\negs_1 \land s_3\}} \shexone \Phi_{e,p, \{s_1, s_3\}}$}
Then $\dettc(e)$ is obtained by replacing in $e$ every triple constraint $p.s$ by $\dettc(e, p.s)$.
The fact that $\dettc(e)$ is deterministic follows from $\preds(e) \subseteq R \cup Q$ and from the definition of the $\Phi_{e,p,X}$. 
Showing that $e \equiv \dettc(e)$ is not hard using the parallel between triple expressions and regular expressions with commutative concatenation.

\newcommand{\tc}[1]{T_{#1}}

\medskip
Assume a fixed Restricted ShEx catalogue $\cat$ and a fixed graph $\graph$.
For every Restricted ShEx shape, we construct an equivalent \microshex{} shape.
The only non-trivial case is for shapes of the form $\shexneigh{e'}$.
We start by computing $\dettc(e')$. 
Note that if $e'$ is a Restricted ShEx triple expression, then so is $\dettc(e')$.
In the sequel, let $\dettc(e') = e = f \shexeach \mathit{op}$ with $\mathit{op} = (\shexneginv{R})^{*} \shexeach (\shexneg{Q})^{*}$ for some $R,Q \subseteq \IRIs$.
The case without $(\shexneg{Q})^{*}$ is similar.

Using the properties of triple expressions mentioned earlier and the fact that $e$ is Restricted ShEx, we can push the `$\shexeach$' and `$*$' operators under the `$\shexone$', then transform the top-level `$\shexone$' into disjunction.
Thus, 
\begin{equation}
    \label{eq:shex-shape-to-disjunctive-normal-form}
    \shexneigh{e} \equiv \shexneigh{F_1 \shexeach \mathit{op}} \lor \cdots \lor \shexneigh{F_{n'} \shexeach \mathit{op}}
\end{equation}
with $n' \ge 1$ and where every $F_j$ is a `$\shexeach$' combination of triple constraints of the form
\begin{equation}
\label{eq:shex-disjunct-as-eachof-of-atoms}
    p_1.\Phi_1 \shexeach \cdots \shexeach p_m.\Phi_m \;\;\shexeach\;\; p_{m+1}.\Phi_{m+1}^{*} \shexeach \cdots \shexeach p_k.\Phi_k^{*},
\end{equation}
for $0 \le m \le k$ ($k=0$ means that the sequence is empty, thus reduces to $\varepsilon$).
Let $F$ be the triple expression (\ref{eq:shex-disjunct-as-eachof-of-atoms}).
For every $p.\Phi$ triple constraint of $F$, let $\nbtc{F, p.\Phi}$ be the number of times $p.\Phi$ is repeated in $F$ without a star (i.e. we do not count occurrences of $p.\Phi^{*}$).

\begin{lemma}
\label{lem:each-of-to-conjunction}
    Let $e = f \shexeach (\shexneginv{R})^{*} \shexeach (\shexneg{Q})^{*}$ be a deterministic triple expression with (for some $0 \le m \le k$), where $f$ is the expression (\ref{eq:shex-disjunct-as-eachof-of-atoms}) above.
    Then $\shexneigh{e}$ is equivalent to the conjunction shape
\begin{align}
    \label{eq:at-least-k}
    \Psi = &&& \bigwedge_{p.\Phi \in \ttc{f}} \shexneighzero{p.\Phi^{\nbtc{f, p.\Phi}}} \\
    \label{eq:at-most-k}
    \land &&& \bigwedge_{p.\Phi \in \ttc{f} \text{ s.t. }p.\Phi^{*}\not\in\ttc{f}} \neg \shexneighzero{p.\Phi^{1+\nbtc{f,p.\Phi}}}\\
    \label{eq:no-not-phi}
    \land &&& \bigwedge_{p \in \preds(f)} \neg \shexneighzero{p. \Phi_{f,p,\emptyset}} \\
    \label{eq:no-R}
    \land &&& \bigwedge_{p \in R \setminus \preds(f)} \neg \shexneighzero{\shexinverse p.\shexneigh{\top}} \\
    \label{eq:no-Q}
    \land &&& \bigwedge_{p \in Q \setminus \preds(f)} \neg \shexneighzero{p.\shexneigh{\top}}
\end{align}
\end{lemma}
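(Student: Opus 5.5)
We fix a graph $\graph$, a catalogue $\cat$, an assignment $\alpha$ and a node $v$, and show that $\neigh^\pm_\graph(v)$ matches $e$ exactly when $v$ satisfies every conjunct of $\Psi$. The argument is a counting argument on the neighbourhood, and determinism of $e$ is what makes it go through. Since $e=\dettc(e')$ and $\preds(e)\subseteq R\cup Q$, each outgoing triple $(v,p,u)$ with $p\in\preds(f)$ satisfies at most one of the constraints $p.\Phi$ in $\ttc{f}$. The shapes $\Phi_{e,p,X}$ partition the possible combinations of the shapes in $\shapes{e,p}$. So $u$ satisfies exactly one $\Phi_{e,p,X}$, and this one occurs in $f$ or equals $\Phi_{f,p,\emptyset}$. (We read $\Phi_{f,p,\emptyset}$ as the residual type $\Phi_{e,p,\emptyset}$ that no constraint of $f$ covers.)

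For each $p.\Phi\in\ttc{f}$ let $N_{p.\Phi}$ be the number of triples $(v,p,u)\in\graph$ with $u\in\semdelta{\Phi}{\alpha}{\graph}$. We first characterise matching in terms of these counts. A subset decomposition of $\neigh^\pm_\graph(v)$ witnessing $e$ must place each triple into a triple constraint it satisfies, or into the open part $\mathit{op}$. By determinism there is only one candidate constraint per triple. Outgoing $p$-triples with $p\in\preds(f)\subseteq R\cup Q$ cannot go into $(\lnot Q)^*$, and inverse triples on predicates in $R$ cannot go into $(\lnot R^-)^*$.

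It follows that $\neigh^\pm_\graph(v)$ matches $e$ iff the following four conditions hold:
\begin{enumerate}
\item $N_{p.\Phi}\ge\nbtc{f,p.\Phi}$ for every $p.\Phi\in\ttc{f}$;
\item $N_{p.\Phi}\le\nbtc{f,p.\Phi}$ whenever $p.\Phi^*$ does not occur in $f$ (otherwise the star absorbs any surplus);
\item no outgoing $p$-triple with $p\in\preds(f)$ has a target satisfying $\Phi_{f,p,\emptyset}$;
\item there are no outgoing $p$-triples for $p\in Q\setminus\preds(f)$ and no incoming $p$-triples for $p\in R\setminus\preds(f)$.
\end{enumerate}
All remaining triples are absorbed by $\mathit{op}$.

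The next step is to match the four conditions to the conjuncts of $\Psi$. By the semantics of $\shexneighzero{p.\Phi^n}$, which is the \microshex{} counterpart of $\geqn{n}{p}{\Phi}$, condition 1 is line (\ref{eq:at-least-k}) and condition 2 is line (\ref{eq:at-most-k}), since $N\le n$ is equivalent to $\neg(N\ge n+1)$. Condition 3 is line (\ref{eq:no-not-phi}), and condition 4 is lines (\ref{eq:no-R}) and (\ref{eq:no-Q}).

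The main obstacle is the left-to-right direction of the matching characterisation: showing that the counting conditions suffice, that is, constructing an actual partition of the neighbourhood. Here we assign, for each $p.\Phi$, exactly $\nbtc{f,p.\Phi}$ of the $N_{p.\Phi}$ satisfying triples to the non-starred copies and the rest to $p.\Phi^*$ when that is present. Determinism guarantees that the sets for distinct constraints are disjoint, so the choices do not conflict. The remaining care is in the treatment of inverse constraints $\shexinverse p.\Phi$ if $f$ contains them; these are handled symmetrically with $R$ in place of $Q$.
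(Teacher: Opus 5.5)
Your route is the paper's own. Determinism leaves each neighbourhood triple at most one candidate constraint, so matching reduces to counting conditions that line up with the conjuncts of $\Psi$; the paper's sets $E_T$, $E'_T$, $E'_0$ are your counts $N_{p.\Phi}$ and leftover triples. The step that fails as written is your dichotomy that the unique type of a $p$-successor $u$ ``occurs in $f$ or equals $\Phi_{f,p,\emptyset}$'', combined with your reading of $\Phi_{f,p,\emptyset}$ as the residual type $\bigwedge_{s\in\shapes{e',p}}\lnot s$ over the original shapes. In the lemma, $f$ is a single disjunct of the normal form (\ref{eq:shex-shape-to-disjunctive-normal-form}). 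It commits to one type $\Phi_{e',p,X}$ for each unstarred occurrence of a constraint $p.s$, so in general several nonempty types occur in $\dettc(e')$ but not in $f$. A successor of such a type is matched by no constraint of $f$, yet it does not satisfy the residual type. Your condition~3 therefore does not rule it out, and the ``if'' direction of your four-condition characterisation breaks. The residual type and ``what no constraint of $f$ covers'' coincide only when every nonempty type occurs in $f$.

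For instance, take $e'=p.s_1\shexeach p.s_2\shexeach(\shexneginv{\emptyset})^{*}\shexeach(\shexneg{\{p\}})^{*}$. One of its disjuncts is $f=p.\Phi_{e',p,\{s_1\}}\shexeach p.\Phi_{e',p,\{s_2\}}$. Consider a node with exactly three $p$-successors, satisfying $s_1\land\lnot s_2$, $\lnot s_1\land s_2$ and $s_1\land s_2$ respectively. It meets your conditions 1--4 with $\Phi_{f,p,\emptyset}=\lnot s_1\land\lnot s_2$. But its neighbourhood does not match $f\shexeach(\shexneginv{\emptyset})^{*}\shexeach(\shexneg{\{p\}})^{*}$, since the third triple fits no constraint of $f$ and cannot go to $(\shexneg{\{p\}})^{*}$. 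The repair is to apply the definition of $\Phi_{\cdot,p,\emptyset}$ to $f$ itself, i.e.\ $\Phi_{f,p,\emptyset}=\bigwedge_{p.\Phi\in\ttc{f}}\lnot\Phi$. Then condition~3 says every $p$-triple with $p\in\preds(f)$ is absorbed by some constraint of $f$, which is exactly what the paper uses to get $E'_0=\emptyset$, and the rest of your counting goes through. Relatedly, $\preds(f)\subseteq R\cup Q$ does not by itself keep such triples out of $(\shexneg{Q})^{*}$, since that needs $p\in Q$. It is determinism, with $\shexneg{Q}$ counted as a triple constraint, that gives you this.
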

\begin{proof}
%
Let $v$ be a node in $\graph$, and let $N'$ be its neighbourhood.
Because $e$ is deterministic, it is not hard to see that if $N' \in \semdelta{e}{\alpha}{\graph,v}$, then $N'$ is a disjoint union $N' = \biguplus_{T \in \ttc{f}} E_T$ where, for every $T \in \ttc{f}$, $E_T$ is the set of triples from $N'$ that match the triple constraint $T$.
Additionally, $|E_T| = \nbtc{e, T}$ if $T^{*}$ does not appear in $e$, and $|E_T| \ge \nbtc{e, T}$ otherwise.

Assume now that $N' \in \semdelta{\Psi}{\alpha}{\graph}$, and for every $T \in \ttc{e}$, let $E'_T \subseteq N'$ be the set of triples from $N'$ that match the triple constraint $T$, and let $E'_0 \subseteq N'$ be the set of triples from $N'$ that match none of the triple constraints of $e$.
Because $e$ is deterministic, we deduce that $N' = E_0 \uplus \biguplus_{T \in \ttc{e}} E'_T$.
The conjuncts (\ref{eq:no-not-phi}), (\ref{eq:no-R}) and (\ref{eq:no-Q}) guarantee that $E'_0 = \emptyset$.
The conjuncts (\ref{eq:at-least-k}) and (\ref{eq:no-not-phi}) guarantee that $|E'_T| \ge \nbtc{e, T}$, and together with the conjuncts (\ref{eq:at-most-k}) they guarantee that if $T$ is not starred in $e$, then $|E'_T| = \nbtc{e, T}$.
From these observations it easily follows that $N' \in \semdelta{e}{\alpha}{\graph}$ iff $N' \in \semdelta{\Psi}{\alpha}{\graph}$.
\end{proof}

Note that the formula $\Psi$ in Lemma~\ref{lem:each-of-to-conjunction} is \microshex.
Therefore, we have shown 
\begin{proposition}
    Every Restricted ShEx schema is equivalent to a \microshex{} schema.
\end{proposition}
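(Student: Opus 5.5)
The plan is to assemble the pieces already developed: shape-by-shape normalisation of Restricted ShEx shapes into \microshex{} shapes, with only the case $\shexneigh{e'}$ needing work.

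\emph{Step 1: reduce to a single neighbourhood shape.} By structural induction on shapes, the cases $\test(c)$, $\test(\vtype)$, $s$, $\lnot$, $\land$, $\lor$ are immediate, since \microshex{} has the same constructors. Auxiliary declarations make all triple expressions shallow, adding fresh shape names. The new names are declared positively, so any stratification extends, and the \GFP-conforming assignment restricted to the original names is unchanged. So it suffices to translate $\shexneigh{e'}$ with $e'$ shallow and Restricted.

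\emph{Step 2: determinise and bring to disjunctive normal form.} Replace $e'$ by $\dettc(e')$. This is equivalent and still Restricted, because each $p.s$ becomes a `$\shexone$' of atoms, i.e.\ an $f'$, possibly under a star. Next use $(f_1\shexone f_2)\shexeach f_3$ distributivity, $(f_1\shexone f_2)^*\equiv f_1^*\shexeach f_2^*$, and the lifting of top-level `$\shexone$' to $\lor$ to obtain \eqref{eq:shex-shape-to-disjunctive-normal-form}. In that form each $F_j$ is as in \eqref{eq:shex-disjunct-as-eachof-of-atoms}. The Restricted hypothesis is exactly what makes this work: stars only sit on $f'$, which are disjunctions of atoms, so they distribute into starred atoms. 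Without it, $(p.s\shexeach q.t)^*$ would not normalise.

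\emph{Step 3: replace each disjunct by a conjunction.} Each disjunct $\shexneigh{F_j\shexeach\mathit{op}}$ is replaced by $\Psi$ from Lemma~\ref{lem:each-of-to-conjunction}. Its conjuncts are all $g$-type \microshex{} shapes or negations of them. The one subtlety is that $\Phi$'s are Boolean combinations of shape names and must be allowed inside $p.\varphi^n$, which the grammar permits. Inverse constraints are handled symmetrically with $\shexinverse p.\varphi^n$.

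\emph{Step 4: the closedness component.} Applying the lemma for the case without $(\shexneg Q)^*$ requires expressing ``no outgoing $p$-edges for $p\notin \preds(f)$''. This is done with the $h$-form $\shexneigh{p_1.\shexneigh{\top}^*\shexeach\cdots\shexeach p_n.\shexneigh{\top}^*\shexeach(\shexneginv{\emptyset})^*}$, conjoined with $\Psi$.

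The main obstacle is verifying that Lemma~\ref{lem:each-of-to-conjunction} applies after Step 2. Normalisation must preserve determinism, and the repeated copies $p.\Phi^{n}$ must be counted correctly when different disjuncts share atoms. I would check determinism directly: the distributive rewrites only duplicate or delete existing triple constraints and never create new ones, so each disjunct inherits the determinism of $\dettc(e')$.
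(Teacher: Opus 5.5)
Your proposal is correct and follows the paper's proof step for step: shallowing, determinisation via $\dettc$, distribution into disjuncts of the form \eqref{eq:shex-disjunct-as-eachof-of-atoms}, and then Lemma~\ref{lem:each-of-to-conjunction}; your Step~4 usefully spells out, via the $h$-form (i.e.\ $\closed(\preds(f))$), the case without $(\shexneg{Q})^*$ that the paper only calls ``similar''. One point that neither you nor the paper addresses is that $\dettc$ and $\Psi$ put same-stratum shape names under negation, so the resulting \microshex{} catalogue is not syntactically stratified; this is harmless, because every rewrite is an equivalence under all assignments, so the operator is unchanged and you can keep the original strata in Definition~\ref{def:fixpoint-stratified}.
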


\subsection{Normalisation of Restricted SHACL}
\label{sec:appendix-normalise-shacl}
In this section, we relate Def.~\ref{def:restricted-shacl} and Def.~\ref{def:syntax-microshacl}. Specifically, we argue that, under the \LFP semantics, complex path expressions as allowed in Def.~\ref{def:restricted-shacl} can be eliminated by adding additional shape declarations, thus obtaining a schema matching Def.~\ref{def:syntax-microshacl}. See Proposition 7 in \cite{OudshoornOrtizSimkus2024} for a similar approach. 

\begin{proposition}
    Under the \LFP semantics, every Restricted SHACL schema is equivalent to a \microshacl{} schema.
\end{proposition}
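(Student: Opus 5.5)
The only thing separating Restricted SHACL from \microshacl{} is the shape $\exists \pathExpr.\varphi$ with $\pathExpr$ an arbitrary path expression. Every other construct of Def.~\ref{def:restricted-shacl} is already in \microshacl{}. Note that $\exists p.\varphi$ and $\exists p^-.\varphi$ are just $\geqn{1}{p}{\varphi}$ and $\geqn{1}{p^-}{\varphi}$.

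The plan is to eliminate complex paths one occurrence at a time, always innermost first. For each occurrence of $\exists\pathExpr.\varphi$ I first introduce a fresh shape name $t$ with $\decl{t}{\varphi}$, so the shape becomes $\exists \pathExpr.t$. I then define by structural recursion on $\pathExpr$ a fresh shape name $r_{\pathExpr,t}$ intended to hold exactly at the nodes $v$ with $\sem{\pathExpr}^\graph_v \cap \alpha(t) \neq \emptyset$. The declarations are:
\begin{gather*}
r_{\id,t}: t\,, \quad r_{p,t}: \geqn{1}{p}{t}\,, \\
r_{\pathExpr_1\cup\pathExpr_2,t}: r_{\pathExpr_1,t}\lor r_{\pathExpr_2,t}\,, \quad r_{\pathExpr_1\cdot\pathExpr_2,t}: r_{\pathExpr_1,\,r_{\pathExpr_2,t}}\,, \quad r_{\pathExpr^*,t}: t \lor r_{\pathExpr,\,r_{\pathExpr^*,t}}\,.
\end{gather*}
For inverses, I first push $^{-}$ down to atoms using $(\pathExpr_1\cdot\pathExpr_2)^- = \pathExpr_2^-\cdot\pathExpr_1^-$, $(\pathExpr_1\cup\pathExpr_2)^-=\pathExpr_1^-\cup\pathExpr_2^-$, $(\pathExpr^*)^-=(\pathExpr^-)^*$, $\id^-=\id$ and $(p^-)^-=p$, and then set $r_{p^-,t}: \geqn{1}{p^-}{t}$. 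The number of new declarations is linear in $|\pathExpr|$. The resulting declarations are positive in the new names and use only \microshacl{} constructs.

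Next I would check that stratification is preserved. All new names $r_{\cdot,t}$ go into the stratum of the shape name whose declaration contained the occurrence. They mention $t$, and other names only through $\varphi$, always positively. The name $t$ itself goes into the same stratum, or a lower one if that suffices, so no negation ever crosses between new names of the same stratum. Any stratified use of $\neg\exists\pathExpr.\varphi$ becomes $\neg r_{\pathExpr,t}$, and because we proceed innermost first this sits inside a stratum in a legal way. To keep negation in the form the definition allows, I would put a negated path shape into its own higher stratum via an auxiliary name.

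The main step, and the one I expect to be the obstacle, is correctness under \LFP{} for the Kleene star. I need to show that in the least assignment of the extended catalogue, $\alpha(r_{\pathExpr,t}) = \{v \mid \sem{\pathExpr}^\graph_v\cap\alpha(t)\neq\emptyset\}$. This is exactly where \LFP{} is needed: under \GFP{}, $r_{p^*,t}: t\lor\geqn{1}{p}{r_{p^*,t}}$ would also hold on $p$-cycles that never reach $t$. I would argue in two directions, by induction on $\pathExpr$ together with an inner induction for the star.
\begin{itemize}
\item[(i)] The intended set satisfies all the new declarations. Combined with the old part of the assignment, this gives a correct assignment extending it within the stratum, so the least one is contained in it.
\item[(ii)] Conversely, every node in the intended set enters the Kleene iteration of the stratum operator after finitely many steps. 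This is shown by induction on the length of a witnessing $\pathExpr$-path to a node in $\alpha(t)$, and for $\pathExpr^*$ on the number of $\pathExpr$-iterations in the witness.
\end{itemize}
The remaining subtlety is that the stratum also contains the original recursive shape names, whose fixpoint is computed simultaneously with the new ones. I would handle this by showing that the projection of the extended operator's least fixpoint onto the old names equals the least fixpoint of the original operator. The reason is that the new names are defined monotonically, and at any fixpoint of the extended operator they are forced to be the unique least solution given the old names. Finally, since selectors $\sigma$ contain no shape names, $\cat_\sel$ transforms the same way, and conformance is preserved.
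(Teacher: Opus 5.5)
Your overall argument is correct and is the paper's idea in a different packaging. The paper compiles $\pathExpr$ into an NFA and introduces one shape name per state: a disjunction of $s_\varphi$ for accepting states and of $\exists r.s_{q'}$ over transitions. You instead compile $\pathExpr$ syntax-directedly, with one name per subexpression occurrence. This is a Thompson-style automaton in which the $\varepsilon$-transitions become plain name-to-name references. Your version is self-contained and visibly linear, and it handles $\id$ and nested inverses explicitly. You also treat two things the paper's sketch leaves implicit: where the new names go in the stratification, and why adding them to a stratum that already contains recursive names leaves the least fixpoint on the old names unchanged.

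That last step, however, is justified incorrectly. It is not true that at \emph{any} fixpoint of the extended operator the new names take their least values given the old ones. Your own example is a counterexample: $r_{p^*,t}$ holding on a $p$-cycle that never reaches $t$ is part of a fixpoint of the extended stratum operator. The property holds only at the least fixpoint, and it needs the standard Beki\'c-style argument.

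Write $F$ for the original stratum operator on the old names and $F'(\beta,\gamma)=(F_{\mathrm{old}}(\beta,\gamma),G(\beta,\gamma))$ for the extended one. Let $h(\beta)$ be the least fixpoint of $G(\beta,\cdot)$; your (i)--(ii) give $F_{\mathrm{old}}(\beta,h(\beta))=F(\beta)$.

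Let $(\beta^*,\gamma^*)$ be the least fixpoint of $F'$. Then $h(\beta^*)\subseteq\gamma^*$, so by monotonicity of $F_{\mathrm{old}}$ in $\gamma$ we get $F'(\beta^*,h(\beta^*))\subseteq(\beta^*,h(\beta^*))$. By Knaster--Tarski this yields $\gamma^*=h(\beta^*)$ and $\beta^*=F(\beta^*)$, so $\mu F\subseteq\beta^*$. Conversely, $(\mu F,h(\mu F))$ is a fixpoint of $F'$, so $\beta^*\subseteq\mu F$.

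This argument uses monotonicity of $F_{\mathrm{old}}$ in the new names. That is exactly why the new names for a negated occurrence must sit in a strictly lower stratum than the name that uses them, as you note at the end of your stratification paragraph.
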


\begin{proof}[Proof (sketch)]
Let $(\cat, \sel)$ be a Restricted SHACL Schema. 
  We only need to show how to eliminate expressions of the form
  $\exists \pi.\varphi$. 
  We perform the following steps for all 
  $\exists \pi.\varphi$ that appear in~$\cat$.
  \begin{enumerate}
  \item Take two fresh shape names $s_{\exists \pi.\varphi}$ and $s_{\varphi}$.
  \item Add the shape declaration $\decl{s_{\varphi}}{\varphi}$ to $\cat$.
  \item Replace each occurrence of $\exists \pi.\varphi$ in $\cat$ with $s_{\exists \pi.\varphi}$.
  \item Construct a non-deterministic finite automaton $A$ that accepts the language of $\pi$. For every state $q$ of $A$,  take a fresh shape name $s_q$ and add the following declarations to $\cat$:
    \begin{enumerate}
    \item $\decl{s_{\exists \pi.\varphi}}{s_{q_0} \lor \cdots \lor s_{q_n}}$ where $q_0, \ldots, q_n$ are the initial states of $A$, 
    \item $\decl{s_q}{s_{\varphi}}$ for each accepting state $q$ of $A$, and
    \item $\decl{s_q}{\exists r.s_{q'}}$ for each transition $(q,r,q')$ of $A$.
    \end{enumerate} 
  \end{enumerate}
  The resulting set of shape declarations is denoted $\cat'$. 
  It is straightforward to check that $(\cat', \sel)$ is a \microshacl{} schema equivalent to $(\cat, \sel)$.  We can also note that the construction of
  $\cat'$ takes only polynomial time in the size of
  $\cat$.
\end{proof}


 \section{NP-Hardness of \SMS for stratified \SSL}
\label{app:3col}

We provide an alternative proof of NP-hardness of validation under \SMS for a fixed stratified $\SSL$ schema (and thus also for a fixed stratified SHACL schema). We reduce from 3COLOURABILITY. 

For a directed graph
$H$, we construct an RDF graph $\mathcal{G}_{H}$ and a stratified $\SSL$ catalogue $\cat$ (independent from $H$) such that $H$ is 3-colourable iff $\mathcal{G}_{H}$ \SMS-conforms to the schema $(\cat, \{
\mathit{Ok}\colon \hasvalue(s)\})$. 

We first describe $\mathcal{G}_H$. We use predicate $\mathsf{edge}$ to represent the edges of $H$: for every edge $(u,v)$ of $H$, we add to
$\mathcal{G}_{H}$ the triple $(u,\mathsf{edge},v)$. We use predicate $\mathsf{self}$ for self-loops, which are used to generate arbitrary colourings of $H$: for each node $v$ of $H$, we add to $\mathcal{G}_{I}$ the
triple $(v,\mathsf{self},v)$.
Finally, we add a fresh apex node $s$ that serves as a spy-point, observing all nodes of $H$ via 
predicate $\mathsf{spy}$:
for each node $v$ of $H$, we add to $\mathcal{G}_{H}$ the
triple $(s,\mathsf{spy},v)$.

The catalogue $\mathcal{C}$ contains the following declarations:
\begin{align*}
  \mathit{Colour}_i &: \exists \mathsf{self}. \mathit{Colour}_i\,,\quad 1 \leq i \leq 3 \\
  \mathit{Error}    &: \bigwedge_{i=1}^3\neg \mathit{Colour}_i \;\lor\;
                       \bigvee_{i=1}^3 \mathit{Colour}_i\land \exists \mathsf{edge}.\mathit{Colour}_i \\
  \mathit{Ok}       &: \neg \exists \mathsf{spy}. \mathit{Error}\,.
\end{align*}

To see that the reduction is correct, observe that in a correct shape assignment, shape name $\mathit{Colour}_i$ can be assigned to an arbitrary subset of nodes of $H$, which is how we guess a colouring of $H$. An error is detected in a node if it gets no colour or it has the same colour as one of its neighbours. Finally, $\mathit{Ok}$ can be assigned to the spy point $s$ iff  no error is detected in any of the spied nodes.

 \section{Proof of Theorem~\ref{thm:shex-combined}}
\label{app:shex-combined}

We begin with $\SMS$. For the $\nptime^\nptime$ upper bound we guess a shape assignment and verify that it is correct and witnesses conformance. These checks can be done easily in $\ptime$, when given access to an oracle that answers if the neighbourhood of a given node matches a given triple expression. This we can assume in $\nptime^\nptime$, because the subproblem can be solved in $\nptime$ by guessing an ordering of edges in the neighbourhood into a sequence, and feeding it to the nondeterministic automaton constructed (in $\ptime$) from the triple expression.

For the lower bound we use the following variant of the  standard $\nptime^\nptime$-complete  problem, $\mathsf{\exists\forall SAT}$: Given an CNF propositional formula over variables $\bar x$ and $\bar y$, decide if there is a valuation of $\bar x$ such that for each valuation of $\bar y$ the formula is false. 

Let us fix a CNF formula with clauses $c_1, c_2, \dots, c_n$ over  variables $x_1, x_2, \dots, x_k$ and $y_1, y_2, \dots, y_\ell$. We will be using $c_1, c_2, \dots, c_n$ and $x_1, x_2, \dots, x_k$ as predicates (edge labels). 

The input graph is a $(k+n)$-pointed star with self-loops: it consists of a central node $a$ 
with an outgoing $c_i$-edge for $i=1, 2, \dots, n$ and an outgoing $x_i$-edge for $i=1, 2, \dots, k$;  the target of each $x_i$-edge has a self-loop with label $p$.

The catalogue consists of two declarations, $s \colon \{p.s \shexeach \top\}$  and 
\begin{gather*}
t\colon \lnot 
\big\{\big(x_1.s\shexeach e_{x_1} \shexone \,  x_1.\lnot s \shexeach e_{\lnot x_1}\big) 
\shexeach \ldots \shexeach
\big(x_k.s\shexeach e_{x_k} \shexone\, x_k.\lnot s \shexeach e_{\lnot x_k}\big) 
\shexeach \\
\big(e_{y_1} \shexone \,e_{\lnot y_1}\big) 
\shexeach \ldots \shexeach
\big(e_{y_\ell} \shexone \,e_{\lnot y_\ell}\big)
\big\} 
\end{gather*}
where the triple expression $e_L$ for literal $L$ is obtained by combining with `$\shexeach$' all expressions $c_i.\{\top\}^*$ such that  $c_i$ contains literal $L$.

The schema has a single selector $t \colon \hasvalue(a)$.

In a correct shape assignment, the shape name $s$ is associated with any subset of $x_i$-successors of $a$, thereby fixing a valuation of $x_1, x_2, \dots, x_k$. 
In order to check if the neighbourhood of $a$ matches the triple expression $e$ in the declaration of $t$, we guess the left or right argument of $\shexone$ in each parenthesis. This corresponds to guessing a valuation of all variables. However, once a correct assignment $\alpha$ is fixed, in expressions of the form $(x_i.s\shexeach e_{x_i} \shexone \,  x_i.\lnot s \shexeach e_{\lnot x_i})$ we have to follow the choice dictated by $\alpha$, and we have actual choice only for variables $y_1, y_2, \dots, y_\ell$. Once all choices are made, the neighbourhood matches the resulting triple expression $e'$ if $e'$ mentions each $c_i.\{\top\}^*$ at least once (we interpret all but one occurrence as $\varepsilon$). Hence, the graph is valid if there exists a valuation of $x_1, x_2, \dots, x_k$ such that for all valuations of $y_1, y_2, \dots, y_\ell$, at least one $c_i.\{\top\}^*$ is not mentioned in the triple expression $e'$ induced by the valuations. By the definition of $e_L$, the latter holds iff the formula is false.

\smallskip 

Let us move to $\LFP$ and $\GFP$. For the $\ptime^\nptime$ upper bound we iterate the suitable operator as explained in the proof of Theorem~\ref{thm:shex-datacompl}, except that we use an $\nptime$ oracle  for testing if the neighbourhood of a given node matches a given triple expression.

For the lower bound we use the following $\ptime^\nptime$-complete problem: 
given a satisfiable 3CNF formula $\varphi$ over ordered variables $x_1,\ldots,x_n$, decide if $x_n = 1$ in the \emph{lexicographically maximal} valuation that satisfies $\varphi$. In other words, we need to check if 1 is the last bit of the maximal $n$-bit number that encodes a valuation satisfying $\varphi$. The reduction is inspired in the $\ptime^\nptime$-hardness proof for stratified logic programs under bounded predicates arities~\cite{DBLP:journals/amai/EiterFFW07}. 

Let $\varphi$ be a satisfiable 3CNF formula with clauses $c_1, \dots,c_k$ over ordered variables $x_1,\ldots,x_n$.
We shall construct an instance of the validation problem that will simulate an algorithm that solves our $\ptime^\nptime$-complete problem by iteratively computing values of variables in the lexicographically maximal valuation satisfying $\varphi$ using satisfiability tests. In iteration 1 we check if $\varphi$ is satisfiable with $x_1=1$, and store the outcome in  $t_{1}$. In iteration $i$ we check if $\varphi$ is satisfiable with $x_1 = t_1,\ \ldots,\ x_{i-1} = t_{i-1},\ x_i=1$, and store the outcome in~$t_i$. After iteration $n$, $t_{1},\dots, t_{n}$ store  the lexicographically maximal valuation satisfying $\varphi$. It suffices to check $t_n$.

The reduction is similar to the one for $\SMS$, but this time there are no variables $y_1, \dots,y_m$, and the values of $x_1, \dots, x_n$ are iteratively determined, rather than guessed. We use predicates $c_1, \dots, c_k$ and $x_1, \dots, x_n$, like before, and shape names $t_1, \dots, t_n$ and $s_1, \dots, s_n$, corresponding to subsequent iterations.
The shape catalogue contains the following  declarations, for $i=1, 2, \dots, n$: 
{\small
\begin{align*}
t_i \colon  
\big\{&
\big(x_1.s_i\shexeach e_{x_1} \shexone  x_1.\lnot s_i \shexeach e_{\lnot x_1}\big) 
\shexeach \\
&\ldots \shexeach
\big(x_{i-1}.s_{i}\shexeach e_{x_{i-1}} \shexone x_{i-1}.\lnot s_{i} \shexeach e_{\lnot x_{i-1}}\big) 
\shexeach \\
& \big(x_{i}.\{\top\}\shexeach e_{x_{i}} \big) \shexeach\\
& x_{i+1}.\{\top\}\shexeach \big (e_{x_{i+1}} \shexone \,  e_{\lnot x_{i+1}}\big) 
\shexeach \ldots \shexeach
x_{n}.\{\top\}\shexeach \big(e_{x_{n}} \shexone\, e_{\lnot x_{n}}\big) 
\big\} \\
s_i \colon \big\{ & x_1^-.t_1 \shexone \dots \shexone x_{i-1}^-.t_{i-1}\big\}
\end{align*}
}
Note that the catalogue is  non-recursive, so the $\LFP$ and $\GFP$ semantics coincide (Proposition~\ref{prop:nonrec-coincide}).
%
The graph is a $(k+n)$-pointed star: it consists of a central node $a$ with an  outgoing $x_i$-edge for $i=1, 2, \dots n$ and an outgoing $c_i$-edge for  $i=1, 2, \dots k$. The schema has a single selector $t_n \colon \hasvalue(a)$.
One can verify that $\varphi$ is a positive instance of the problem iff the graph $\LFP/\GFP$-conforms to the schema.

\filip{Ultimately, this should probably make its way to the body (Section 2), one way or another.}

\begin{proposition} \label{prop:nonrec-coincide}
For non-recursive catalogues \LFP, \GFP, b\SMS, and c\SMS semantics coincide. 
\end{proposition}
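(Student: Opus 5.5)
The plan is to show that a non-recursive catalogue has exactly one correct shape assignment. Once that is established, the four semantics must agree. Let $\cat$ be non-recursive, so the dependency graph on $\dom{\cat}$ (edge $(s,t)$ iff $t$ occurs in $\cat(s)$) is acyclic. Fix a topological order $s_1,\dots,s_m$ of $\dom{\cat}$ such that each $\cat(s_j)$ mentions only names among $s_1,\dots,s_{j-1}$. Putting every name in its own stratum $\Sigma_{j-1}=\{s_j\}$ gives a stratification in the sense of Definition~\ref{def:stratification}: within a stratum there are no self-references, so every negated name lies in a strictly lower stratum. After pushing negations inward, which does not change the set of names used, this also holds for arbitrary $\lnot\varphi$. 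So $\LFP$ and $\GFP$ are defined for $\cat$, and likewise for $\cat_\sel$, which stays non-recursive because the fresh names $t_{s,\sigma}$ are referenced by nobody.

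The key lemma is that if $\alpha,\beta$ are both correct for $\cat$ and $\graph$, then $\alpha=\beta$. The proof is by induction on $j$: we show $\alpha(s_j)=\beta(s_j)$. The value $\semdelta{\varphi}{\alpha}{\graph}$ depends only on $\alpha$ restricted to the names occurring in $\varphi$; this follows by a routine structural induction using Table~\ref{tab:simple-shapes-satisfaction}, and it also holds for the ShEx and SHACL extensions. Given this, correctness yields
\[\alpha(s_j)=\semdelta{\cat(s_j)}{\alpha}{\graph}=\semdelta{\cat(s_j)}{\beta}{\graph}=\beta(s_j),\]
where the middle equality uses the induction hypothesis on $s_1,\dots,s_{j-1}$. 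For existence, we define $\alpha(s_j)$ along the same order by exactly this equation; the resulting assignment is correct by construction, so $\denot{\CName}[\GName][\SMS]$ is a singleton $\{\alpha^\ast\}$.

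It remains to identify the other semantics with $\alpha^\ast$. Under the one-name-per-stratum stratification, each step of Definition~\ref{def:fixpoint-stratified} asks for the least (resp.\ largest) $\alpha_i$ over $\Sigma_i$ making $\alpha_0\cup\dots\cup\alpha_i$ correct for $\cat_i$. Since $\cat(s_{i+1})$ refers only to already-fixed names, the only such $\alpha_i$ is the one given by the defining equation. Hence least equals largest, and both equal $\alpha^\ast$ restricted to that name. Because the $\LFP$/$\GFP$ assignment is independent of the chosen stratification (as cited after Definition~\ref{def:stratification}), we get $\denot{\CName}[\GName][\LFP]=\denot{\CName}[\GName][\GFP]=\alpha^\ast$. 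Applying the same argument to $\cat_\sel$, brave and cautious $\SMS$ quantify over the single correct assignment of $\cat_\sel$, so both reduce to the condition that $\graph$ conforms under $\alpha^\ast$, which is also the $\LFP$ and $\GFP$ condition.

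The main obstacle is bookkeeping rather than any deep step. First, the stratification definition restricts negation syntactically, which is handled by the negation normal form remark. Second, for ShEx one must check that triple-expression semantics also depends only on the referenced names, which again is a structural induction.
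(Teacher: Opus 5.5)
Your proposal is correct and is essentially the paper's argument. The paper unfolds shape names into their definitions until no shape depends on another, which is the syntactic form of your induction along a topological order, and it concludes as you do that there is exactly one correct assignment (for $\cat_\sel$ as well), so all four semantics agree; you do not need the appeal to stratification-independence, since by Definition~\ref{def:fixpoint-stratified} any stratified \LFP or \GFP assignment is correct for $\cat$ and therefore equals the unique correct one.
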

\begin{proof}[Proof sketch]
For such schemas, the initial shape assignment is effectively irrelevant, since you can rewrite the schema into an equivalent one where no shape depends on another. This is done by substituting shape names used in shapes with their declared definitions. In consequence, there is only one correct shape assignment. 
\end{proof}

\end{document}